\newtheorem{theorem}{Theorem}[section]
\newtheorem{lemma}[theorem]{Lemma}
\newtheorem{proposition}[theorem]{Proposition}
\newtheorem{corollary}[theorem]{Corollary}
\newtheorem{remark}[theorem]{Remark}
\numberwithin{equation}{section}
\def\ssl{{\mathfrak{sl}}}
\def\bbbt{{\mathbb T}}
\def\bbbn{{\mathbb N}}
\def\bbbc{{\mathbb C}}
\def\bbbd{{\mathbb D}}
\def\bbbz{{\mathbb Z}}
\def\be{{\mathbf e}}
\def\cB{{\cal B}}
\def\cR{{\cal R}}
\begin{document}

\title{Darboux transformations with tetrahedral reduction group and related 
integrable systems.}
\author{George Berkeley$^{\star}$, Alexander V. Mikhailov$^{\star}$
and Pavlos Xenitidis $ ^\dagger $\\
$\star$ Department of Applied Mathematics , University of Leeds, UK\\
$\dagger$ School of Mathematics, Statistics $\&$ Actuarial Science,\\ 
University of Kent, UK }
\date{}
\maketitle

\begin{abstract}
In this paper we derive new two-component integrable differential difference and partial difference systems by applying a Lax-Darboux scheme to an operator formed from an $\ssl_3(\bbbc)$-based automorphic Lie algebra. The integrability of the found systems is demonstrated via Lax pairs and generalised symmetries.  
\end{abstract}

\section{Introduction}

Most interesting and useful integrable systems in applications are often  
the result of a reduction of a generic system. A systematic study of reductions 
originating from the works  \cite{mik79,mik80,mik81,mik_dis} led 
to the concept of the reduction group. It had been realised that
reductions of the Lax structures can be associated with its invariance with
respect to a group of automorphisms ({\sl  a reduction group}), which include a fractional
linear 
transformation of the spectral parameter and simultaneous Lie algebra
automorphisms. Subalgebras of the generalised loop Lie algebra, which are 
invariant with
respect to reduction groups were introduced in \cite{mik_dis}. 
These subalgebras  have been further studied in \cite{LM05} where they  
acquired the name {\sl automorphic Lie algebras}. In this setup the problem of 
classification of 
(algebraic) reductions can be reduced to the problem of classification of the 
reduction groups, or more precisely to the problem of classification of 
automorphic Lie algebras. 

In the case of a Lax operator which is rational with respect to the spectral parameter, a 
reduction group is finite and isomorphic to one of the following list: a cyclic 
group, a dihedral group, a symmetry group of a Platonic solid or a direct product of a 
finite number of those. There is a good
progress in the problem of classification of automorphic Lie algebras 
corresponding to finite reduction groups acting by inner automorphisms on the 
base Lie algebra \cite{LM05,LM04,sara,bm2,bury,ls10,kls14,kls15}. In 
particular, it has been shown that in the case of  $\ssl_2$ based algebras 
it is sufficient to consider reduction groups isomorphic to $\bbbz_2$ and 
$\bbbz_2\times \bbbz_2\simeq \bbbd_2$ (other finite reduction groups result in 
the same set of automorphic Lie algebras). The Lax structures and integrable 
partial differential equations, corresponding to automorphic Lie algebras, have 
been studied in  \cite{mik79,mik80,mik81,LM04,bm2,bury}. Having a Lax structure we can construct 
Darboux transformations which lead us to associated differential-difference and 
finite difference integrable systems, i.e. extend it to a Lax-Darboux scheme.

Recently we 
extended the reduction group method to the Lax-Darboux schemes of 
Nonlinear Schr\"odinger type, associated with  all $\ssl_2$ based 
automorphic Lie algebras \cite{smx}. We constructed reduction group invariant 
Darboux matrices and corresponding differential-difference and finite 
difference integrable systems.  For $\ssl_N,\ N>2$, the irreducible 
faithful projective representations of finite reduction groups 
are possible only 
in the cases $N=3,4,5,6$. In this paper we construct the extension of the Lax 
Darboux scheme for the case $N=3$ with the tetrahedral reduction group and the 
poles of the Lax operator located at the points of the most degenerate orbit 
(the vertices of the tetrahedron).  The hyperbolic and parabolic partial 
differential equations associated with such Lax operator have been studied in 
\cite{mik80} and \cite{mr89g:58092} respectively. Here we construct their corresponding Darboux matrices and B{\"a}cklund transformations. We interpret the Darboux matrices as defining shifts on the lattice and the corresponding B{\"a}cklund transformations as non evolutionary differential-difference equations. In this interpretation we also derive new systems of differential-difference equations, as well as new systems of difference equations. Moreover, we present reductions, potentiations and Miura transformations which relate our discrete systems to Bogoyavlensky lattices and some new six-point difference equations.

It can be shown that for the $N=3$ case, with the most degenerate orbit, Lax
pairs corresponding to automorphic Lie algebras with tetrahedral, octahedral and
icosahedral reduction group produce integrable PDEs which are related by point
transformation \cite{bury}.  This is consistent with the recent 
results published in \cite{kls15}  where good progress in 
the problem of classification of automorphic Lie algebras with inner 
automorphisms has been achieved. Hence the case we study can be considered as
representative of Lax pairs with finite reduction group and such an orbit.

The paper is organised as follows. The next section contains all the necessary background material on the tetrahedral reduction group. In Section \ref{deriveDarbouxs} we consider Lax operators invariant under this reduction group and derive corresponding Darboux transformations along with integrable non evolutionary differential difference equations. In Section  \ref{localSymDerivation} we employ the Darboux transformations and systematically construct evolutionary differential-difference  systems. Section \ref{discretesysDerivations} deals with the derivation of some new fully discrete systems and the next one contains related integrable systems, obtained via reductions, potentiations and Miura transformations. The last section presents an overall evaluation of the results obtained in the main body of the paper.

\section{Tetrahedral reduction group and corresponding automorphic Lie algebra}

The tetrahedral reduction group was introduced and studied in 
\cite{mik80} and \cite{mr89g:58092} in the context of integrable partial 
differential equations and their Lax representations.

The group of orientation-preserving symmetries of a regular tetrahedron, known 
as the tetrahedral group has 12 elements and it is isomorphic to the 
alternating group $A_4$, which can be generated by two 
elements $r,s$ satisfying relations
\begin{equation}
A_4=<r,s\ |\ r^2=s^3=(rs)^3=e>.
\end{equation}

We consider two projective faithful representations of the 
tetrahedral group. It is sufficient to define the representation for the 
generators of the group. The first one $\sigma:A_4\mapsto PSL_2(\bbbc)$
we realise by fractional linear transformations 
\begin{equation}
 \label{sigma}
 \sigma_s:\lambda\mapsto\omega\lambda,\quad 
\sigma_r:\lambda\mapsto\frac{\lambda +2}{\lambda -1},\qquad \omega 
=\exp\left(\frac{2\pi i}{3}\right).
\end{equation}
The second representation
$\rho: A_4\mapsto PSL_3(\bbbc)$ is given by inner automorphisms of the algebra 
$\ssl_3(\bbbc)$
\[
 \rho_s:{\bf a}\mapsto Q_s{\bf a}Q_s^{-1},\quad \rho_r:{\bf a}\mapsto Q_r{\bf 
a}Q_r^{-1},\qquad {\bf a}\in \ssl_3(\bbbc),
\]
where 
\begin{displaymath}\textbf{Q}_s=\left(
\begin{array}{ccc}
\omega &  0 & 0 \\
0 & \omega^2 & 0 \\
0 & 0 & 1 
\end{array} \right),
\textbf{Q}_r=\left(
\begin{array}{ccc}
-1 & 2 & 2 \\
2 & -1 & 2 \\
2 & 2 & -1 
\end{array} \right).  \end{displaymath} 
It is easy to verify that representations $\sigma$ and $\rho$ are faithful and 
irreducible. The group of linear fractional transformations generated by 
$\sigma_s,\sigma_r$ is a subgroup of the group of automorphisms of the field of 
rational functions $\bbbc(\lambda)$ of variable $\lambda$, while  the group 
generated by $\rho_r,\rho_s$ is a subgroup of inner automorphisms of the 
Lie algebra $\ssl_3(\bbbc)$. 

We consider a generalised loop algebra $\ssl_3(\bbbc(\lambda))$ 
and transformations
\[
 g_r: {\bf a}(\lambda)\mapsto Q_r({\bf a}(\sigma_r^{-1}(\lambda)))Q_r^{-1},\  g_s: 
{\bf a}(\lambda)\mapsto Q_s({\bf{a}}(\sigma_s^{-1}(\lambda)))Q_s^{-1},\quad {\bf 
a}(\lambda)\in \ssl_3(\bbbc(\lambda)).
\]
This group  $\bbbt$ generated by $g_r,g_s$  is isomorphic to the alternating
group $A_4$ and is a subgroup of the group of automorphisms of the algebra 
$\ssl_3(\bbbc(\lambda))$. Following \cite{mik80}, \cite{mr89g:58092} we shall 
refer to $\bbbt$ as the tetrahedral reduction group. The $\bbbt$ invariant subalgebra 
\[
  \ssl_3(\bbbc(\lambda))^\bbbt=\{ {\bf 
a}(\lambda)\in \ssl_3(\bbbc(\lambda))\,|\, g_s({\bf 
a}(\lambda))=  g_r({\bf 
a}(\lambda))={\bf 
a}(\lambda)\}
\]
is called a tetrahedral automorphic Lie algebra.

The group of fractional linear transformations generated by 
$\sigma_r,\sigma_s$ (\ref{sigma}) has three degenerate orbits corresponding to 
fixed points. Two of them are  
vertices of the tetrahedron and the dual tetrahedron 
\[
 \bbbt(\infty)=\{\infty, 1,\omega,\omega^2\}, \qquad  \bbbt(0)=\{0, 
-2,-2\omega,-2\omega^2\}
\]
and one orbit corresponds to the middles of the edges
\[
 \bbbt(1+\sqrt{3})=\{\lambda\in\bbbc\,|\, \lambda^6-20\lambda^3-8=0\}\, .
\]
Let us define the automorphic subalgebra $\mathfrak{A}(\infty)\subset  
\ssl_3(\bbbc(\lambda))^\bbbt$ corresponding to the orbit $ \bbbt(\infty)$. Let 
$\cR(\bbbt(\infty))$ denote the set of all rational functions which may have 
poles at the points of the orbit $\bbbt(\infty)$ and are regular elsewhere. 
Obviously  $\cR(\bbbt(\infty))$ is a ring with automorphisms generated by 
$\sigma_r,\sigma_s$ (\ref{sigma}). Then
\[
 \mathfrak{A}(\infty)= \{{\bf 
a}(\lambda)\in  \cR(\bbbt(\infty))\bigotimes 
\ssl_3(\bbbc)\,|\,   g_s({\bf 
a}(\lambda))= g_r({\bf 
a}(\lambda))={\bf 
a}(\lambda)\}.  
\]
Similarly one can define subalgebras $\mathfrak{A}(\Gamma)\subset  
\ssl_3(\bbbc(\lambda))^\bbbt$, where $\Gamma\subset\bbbc$ is any finite set of 
points 
\[
 \mathfrak{A}(\Gamma)= \cR(\bbbt(\Gamma))\bigotimes 
\ssl_3(\bbbc)\bigcap \ssl_3(\bbbc(\lambda))^\bbbt.
\]

The generators of $ \mathfrak{A}(\infty)$ can be constructed as
\begin{equation} \mathbf{e}_1=\langle\lambda 
\mathbf{E}_{13}\rangle_\mathbb{T},\enspace{} 
\mathbf{e}_2=\langle\lambda \textbf{E}_{21}\rangle_\mathbb{T},\enspace{} 
\mathbf{e}_3=\langle\lambda \mathbf{E}_{32}\rangle_\mathbb{T},\end{equation}
where $\mathbf{E}_{ij}$ is the matrix with elements 
$[\mathbf{E}_{i,j}]_{kl}=\delta_{ik}\delta_{jl}$ and $\langle a(\lambda)\rangle 
_\mathbb{T} $ denotes  the group 
average operator
\begin{equation}\langle a(\lambda)\rangle_\mathbb{T}=\frac{1}{12}\sum_{g\in 
\mathbb{T}}g(a(\lambda)).\end{equation}
These are given explicitly as
\begin{displaymath}\textbf{e}_1=\left(
\begin{array}{ccc}
-\frac{1}{6}\frac{\lambda^3+2}{\lambda^3-1} &  -\frac{1}{2}\frac{\lambda^2}{\lambda^3-1} & \frac{1}{4}\frac{\lambda^4}{\lambda^3-1} \\
\frac{\lambda}{\lambda^3-1} & \frac{1}{3}\frac{\lambda^3+2}{\lambda^3-1} & -\frac{1}{2}\frac{\lambda^2}{\lambda^3-1} \\
\frac{\lambda^3+2}{\lambda^3-1} & \frac{\lambda }{\lambda^3-1} & -\frac{1}{6}\frac{\lambda^3+2}{\lambda^3-1} 
\end{array} \right)\end{displaymath}

\begin{displaymath}\textbf{e}_2=\left(
\begin{array}{ccc}
-\frac{1}{6}\frac{\lambda^3+2}{\lambda^3-1} & \frac{\lambda^2}{\lambda^3-1} & \frac{\lambda }{\lambda^3-1} \\
\frac{1}{4}\frac{\lambda^4}{\lambda^3-1} & -\frac{1}{6}\frac{\lambda^3+2}{\lambda^3-1} & -\frac{1}{2}\frac{\lambda^2}{\lambda^3-1} \\
-\frac{1}{2}\frac{\lambda^2 }{\lambda^3-1} & \frac{\lambda }{\lambda^3-1} & \frac{1}{3}\frac{\lambda^3+2}{\lambda^3-1} 
\end{array} \right)\end{displaymath}

\begin{displaymath}\textbf{e}_3=\left(
\begin{array}{ccc}
\frac{1}{3}\frac{\lambda^3+2}{\lambda^3-1} & -\frac{1}{2}\frac{\lambda^2}{\lambda^3-1} & \frac{\lambda }{\lambda^3-1} \\
\frac{\lambda }{\lambda^3-1} & -\frac{1}{6}\frac{\lambda^3+2}{\lambda^3-1} & \frac{\lambda^2}{\lambda^3-1} \\
-\frac{1}{2}\frac{\lambda^2 }{\lambda^3-1} & \frac{1}{4}\frac{\lambda^4 }{\lambda^3-1} & -\frac{1}{6}\frac{\lambda^3+2}{\lambda^3-1} 
\end{array} \right).\end{displaymath}

In fact it can be shown that $\mathfrak{A}(\Gamma)$ 
possesses a quasi graded structure
\[ \mathfrak{A}(\Gamma)=\bigoplus_{k=1}^\infty A^k,\qquad [A^n,A^m]\subset 
A^{n+m}\bigoplus A^{n+m-1}.\] 
In the case $\mathfrak{A}(\infty)$
\[A^k=\{J^{k-1}\mathbf{e}_1,J^{k-1}\mathbf{e}_2,\ldots 
,J^{k-1}\mathbf{e}_8\},\qquad k\in\bbbn,\]
where 
\begin{equation}\begin{aligned}\label{basis}
& \mathbf{e}_4=[\mathbf{e}_1,\mathbf{e}_3],\text{  } 
\mathbf{e}_5=[\mathbf{e}_2,\mathbf{e}_1],\text{  } 
\mathbf{e}_6=[\mathbf{e}_3,\mathbf{e}_2]\\
& \mathbf{e}_7=[[\mathbf{e}_1,\mathbf{e}_3],\mathbf{e}_2],\text{  
}\mathbf{e}_8=[[\mathbf{e}_2,\mathbf{e}_1],\mathbf{e}_3]
\end{aligned}
\end{equation} 
 and $J$ is the automorphic function of $\lambda$
 \[
  J\,=\,\frac{\lambda ^3 \left(\lambda ^3+8\right)^3}{4 
\left(\lambda ^3-1\right)^3.}
 \]
An automorphic Lie algebra can also be regarded as a finite dimensional $\bbbc[J]$  
module with Lie product, which is a special degenerate case of polynomial Lie 
algebras introduced in \cite{BL2002}. This proved to be convenient in the 
problem of classification of automorphic Lie algebras 
\cite{bm2,bury,kls14,kls15}.

It is also useful to define an automorphic associative algebra 
$\mathfrak{B}(\infty)$ as the algebra generated by matrices $\be_1,\ldots ,\be_8$ 
and the identity matrix $\mathbf{I}$. Indeed, we can take products  
$\be_n \be_m$ with usual matrix multiplication and the result can be 
represented as a linear combination of $\be_k,\ k=1,\ldots,8$ and the identity 
matrix with coefficients from $\bbbc[J]$
\[
 \be_n \be_m=\sum_{k=1}^8 B_{nm}^k(J)\be_k+D_{nm}(J)\mathbf{I},
\]
where the structure constants $B_{nm}^k(J)$ are linear functions of $J$ and 
$D_{nm}(J)=D_{mn}(J)$ are quadratic polynomials of the automorphic function 
$J$. Thus, algebra $\mathfrak{B}(\infty)$ can be regarded either as a finitely 
generated  $\bbbc[J]$ module with associative multiplication, or as an infinite 
dimensional quasi-graded associative algebra 
\[
 \mathfrak{B}(\infty)=\bigoplus_{k=1}^\infty {\cB}_k,\qquad 
{\cB}_n{\cB}_m\subset {\cB}_{n+m+1}\bigoplus{\cB}_{n+m}\bigoplus{\cB}_{n+m-1},
\]
where
\[
 {\cB}_k=J^{k-1}{\rm span}_\bbbc(\mathbf{I}, 
\mathbf{e}_1,\mathbf{e}_2,\ldots,\mathbf{e}_8).
\]

In the next Section we will use the 
following Lemma.

\begin{lemma}The basis elements $\mathbf{e}_i$, $ i=1,2,3,$ satisfy the product 
relations, \label{prodrelations}
\begin{align*}
& \mathbf{e}_i\mathbf{e}_i=\frac{2}{9}\mathbf{I}-\frac{1}{3}\mathbf{e}_i,&\\
& 
\mathbf{e}_1\mathbf{e}_2=-\frac{1}{9}\mathbf{I}+\frac{1}{3}\left(\mathbf{e}
_1+\mathbf{e}_2\right),& 
\mathbf{e}_2\mathbf{e}_1=\be_5-\frac{1}{9}\mathbf{I}+\frac{1}{3}\left(\mathbf{e}
_1+\mathbf{e}_2\right),\\
& 
\mathbf{e}_2\mathbf{e}_3=-\frac{1}{9}\mathbf{I}+\frac{1}{3}\left(\mathbf{e}
_2+\mathbf{e}_3\right),& 
\mathbf{e}_3\mathbf{e}_2=\be_6-\frac{1}{9}\mathbf{I}+\frac{1}{3}\left(\mathbf{e}
_2+\mathbf{e}_3\right),\\
& 
\mathbf{e}_3\mathbf{e}_1=-\frac{1}{9}\mathbf{I}+\frac{1}{3}\left(\mathbf{e}
_3+\mathbf{e}_1\right),& 
\mathbf{e}_1\mathbf{e}_3=\be_4-\frac{1}{9}\mathbf{I}+\frac{1}{3}\left(\mathbf{e}
_3+\mathbf{e}_1\right).\\
\end{align*}
\end{lemma}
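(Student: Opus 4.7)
The plan is to verify the identities by combining a structural reduction with a short direct matrix computation.

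First, I would use the commutator definitions in (\ref{basis}) to halve the work. Since $\mathbf{e}_4 = [\mathbf{e}_1,\mathbf{e}_3]$, $\mathbf{e}_5 = [\mathbf{e}_2,\mathbf{e}_1]$, and $\mathbf{e}_6 = [\mathbf{e}_3,\mathbf{e}_2]$, each right-column relation of the lemma follows tautologically from its left-column counterpart once the corresponding commutator is added. For example, $\mathbf{e}_2\mathbf{e}_1 = \mathbf{e}_1\mathbf{e}_2 + [\mathbf{e}_2,\mathbf{e}_1] = \mathbf{e}_1\mathbf{e}_2 + \mathbf{e}_5$, so the stated formula for $\mathbf{e}_2\mathbf{e}_1$ is immediate once that for $\mathbf{e}_1\mathbf{e}_2$ is known. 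Hence it suffices to establish the three squares $\mathbf{e}_i^2$ and three cross products in one chosen ordering, say $\mathbf{e}_1\mathbf{e}_2$, $\mathbf{e}_2\mathbf{e}_3$, $\mathbf{e}_3\mathbf{e}_1$.

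Second, for each of these six identities I would multiply the explicit $3\times 3$ matrices displayed above the lemma. Entries of $\mathbf{e}_i\mathbf{e}_j$ naively carry denominator $(\lambda^3-1)^2$, while the right-hand sides have denominator $(\lambda^3-1)$; each identity therefore amounts to checking that the numerator of each entry of $\mathbf{e}_i\mathbf{e}_j$ is divisible by an additional factor $(\lambda^3-1)$ and that the resulting simple-pole expression matches the right-hand side. In the language of the quasi-grading, this is the assertion that $\mathbf{e}_i\mathbf{e}_j \in \cB_1$, not merely in $\cB_1\oplus\cB_2\oplus\cB_3$ as the grading alone would allow.

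Third, to shorten the scalar comparisons, I would exploit the $\bbbt$-equivariance of both sides. Under $g_s$ (conjugation by the diagonal $\textbf{Q}_s$ composed with $\lambda \mapsto \omega^{-1}\lambda$) the nine matrix entries split into three orbits of length three, so one representative per orbit suffices; this reduces each identity to three polynomial comparisons of modest degree.

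The principal obstacle is the divisibility check: confirming that the naive order-two poles at each point of $\bbbt(\infty)$ actually cancel, i.e.\ that no $\cB_2$ or $\cB_3$ component is present. This is the only non-trivial content of the lemma; once the divisibility is verified, the remaining entry-by-entry comparisons are routine polynomial arithmetic.
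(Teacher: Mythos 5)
Your plan is correct and coincides with the paper's own argument, which is simply the remark that all structure constants can be found by direct computation; your observation that the right-hand column follows tautologically from the left-hand column together with $\mathbf{e}_4=[\mathbf{e}_1,\mathbf{e}_3]$, $\mathbf{e}_5=[\mathbf{e}_2,\mathbf{e}_1]$, $\mathbf{e}_6=[\mathbf{e}_3,\mathbf{e}_2]$ is a valid halving of that computation. One practical caveat if you multiply the matrices literally as printed: the displayed $\mathbf{e}_1$ contains a misprint in its $(3,1)$ entry, which should read $\frac{\lambda^2}{\lambda^3-1}$ rather than $\frac{\lambda^3+2}{\lambda^3-1}$ (as forced by $\mathbf{e}_1=M_{(0)}^{-1}\mathbf{e}_2M_{(0)}$, or by $g_s$-invariance, which requires this entry to vanish at $\lambda=0$); with the printed entry even $\mathbf{e}_1\mathbf{e}_1=\frac{2}{9}\mathbf{I}-\frac{1}{3}\mathbf{e}_1$ fails at $\lambda=0$, whereas with the corrected entry the stated relations check out.
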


All 
structure constants can be easily found by direct computation.  Having them we 
can determine the structure constants of the automorphic Lie algebra  
$\mathfrak{A}(\infty)$
\[
 [\be_n,\be_m]=\sum_{k=1}^8 (B_{nm}^k(J)-B_{mn}^k(J))\be_k.
\]

\section{Derivation of Darboux transformations}
\label{deriveDarbouxs}

In this section Darboux transformations are constructed  for a Lax operator $L$ 
invariant with respect to the tetrahedral reduction group $\mathbb{T}$. $L$ is 
one of a pair of operators, the compatibility of which produces an integrable 
system of partial differential equations (PDEs).
To be explicit
$$L=\partial_x+u \mathbf{e}_1+v \mathbf{e}_2+w \mathbf{e}_3$$
where $uvw=1$. The second operator $A$ is given by 
$$A\,=\,\partial_t+U \mathbf{e}_1+ V \mathbf{e}_2 + W \mathbf{e}_3 + u w \mathbf{e}_4  + v u \mathbf{e}_5 + w v \mathbf{e}_6\,, $$
where
\begin{subequations}
\begin{align}
U & = \frac{u}{3}\, \left(-u+2v+2 w + \frac{u_x}{u}+2\frac{w_x}{w}\right),\\
V & =  \frac{v}{3}\, \left(-v+2w+2 u+\frac{v_x}{v}+2\frac{u_x}{u}\right),\\
W & = \frac{w}{3}\, \left(-w+2u+2 v+\frac{w_x}{w}+2\frac{v_x}{v}\right),
\end{align}
\end{subequations}
The compatibility condition $\left[L,A \right]=0$ yields the integrable system
\begin{equation} \label{eq:sys}
u_t \, =\, \partial_x U ,\quad v_t \, =\,  \partial_x V ,\quad w_t \, =\,  \partial_x W. \end{equation} 
Note that due to the relation $uvw=1$ only two of the above equations are 
independent. Using an invertible point transformation this system can rewritten 
in the form 
\[  i\psi_t=\psi_{xx}+(\psi_x^\star)^2+i(e^{-\psi-\psi^\star}+\omega e^{-\omega 
\psi-\omega^\star \psi^\star}+\omega^* e^{-\omega^\star \psi-\omega 
\psi^\star})\psi_1^*,\quad 
\omega=e^{\frac{2\pi 
i}{3}}\, ,\] 
which coinsides with equation (u3) in \cite{mr89g:58092} where the above Lax 
pair was also presented.

We search for Darboux transformations of $L$. Recall that the corresponding Darboux matrix $M$ must satisfy
\begin{equation*}
MLM^{-1}=L_1,
\end{equation*} 
where $L_1$ has the same form as $L$ but with the ``updated'' functions $u_1$, $v_1$ and $w_1$. Equivalently
\begin{equation}\label{Dcompat}
M_x=M\left(u \mathbf{e}_1+ v \mathbf{e}_2+ w \mathbf{e}_3\right)-\left(u_1 \mathbf{e}_1+v_1 \mathbf{e}_2+w_1 \mathbf{e}_3\right)M
\end{equation}
for $M$ invertible. By inspection it is clear that
(\ref{eq:sys}) is invariant with respect to cyclic permutations of the functions 
$u$, $v$ and $w$. Hence given a solution $(u,v,w)$, one can generate a new 
solution by simple permutation. This property is expressed at the level of the 
Lax structure by the existence of the following Darboux matrix 
\begin{equation*}
M_{(0)}=\left(
\begin{array}{ccc}
 0 & 0 & 1 \\
 1 & 0 & 0 \\
 0 & 1 & 0 \\
\end{array}
\right).
\end{equation*}
A short computation demonstrates that
\begin{equation*}
M_{(0)}\mathbf{e}_1M_{(0)}^{-1}=\mathbf{e}_2, \quad M_{(0)}\mathbf{e}_2M_{(0)}^{-1}=\mathbf{e}_3, \quad M_{(0)}\mathbf{e}_3M_{(0)}^{-1}=\mathbf{e}_1.
\end{equation*}
Thus we see that application of $M_{(0)}$ results in the updated functions
\begin{equation*} 
u_1 = v, \quad  v_1 = w, \quad w_1 = u. 
\end{equation*}
Whereas applying $M_{(0)}^2$ results in the updated functions
\begin{equation*} 
u_1 = w, \quad  v_1 = u, \quad w_1 = v. 
\end{equation*}

Let us construct the simplest $\lambda$--dependent  Darboux transformations.  
As a simplifying assumption it is natural to consider the matrix $M$ to be 
invariant with respect to the reduction group. The simplest such $M$ will have 
simple poles belonging to a degenerate orbit. In particular we choose these 
poles to belong to the orbit $\bbbt(\infty)$. Hence we represent $M$ in the form
\begin{equation}\label{ansatz}
M=f \mathbf{I}+a \mathbf{e}_1+b \mathbf{e}_2+c \mathbf{e}_3.
\end{equation}

\begin{theorem}
\label{dsimplepoles}Suppose $M$ as given in (\ref{ansatz}) is a Darboux matrix for the operator $L=\partial_x+u \mathbf{e}_1+v \mathbf{e}_2+w \mathbf{e}_3$, where $uvw=1$, then
\begin{equation}\label{e456eqns}
 a v_1- b u=0, \enspace{}  b w_1- c v=0,\enspace{}  c u_1- a w=0,
\end{equation}
\begin{subequations}\label{e123eqns}
\begin{align}
& a_x+\frac{1}{3}\left((b+c-a+3f)(u_1-u)+a(v_1+w_1-v-w)\right)=0\\
&  b_x+\frac{1}{3}\left((c+a-b+3f)(v_1-v)+b(w_1+u_1-w-u)\right)=0\\
&  c_x+\frac{1}{3}\left( (a+b-c+3 f)(w_1-w)+c (u_1+v_1-u-v)\right)=0.
\end{align}
\end{subequations}
and
\begin{equation}
\label{ieqn} f_x+\frac{1}{9}\left[ (2 a-b-c)(u_1-u)+ (2 b-a-c)(v_1-v)+ (2 c-a-b)(w_1-w)\right]=0
\end{equation}
\end{theorem}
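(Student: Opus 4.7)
The plan is to substitute the ansatz $M = f\mathbf{I} + a\mathbf{e}_1 + b\mathbf{e}_2 + c\mathbf{e}_3$ directly into the Darboux compatibility relation (\ref{Dcompat}) and then compare coefficients of the basis elements $\mathbf{I}, \mathbf{e}_1, \mathbf{e}_2, \mathbf{e}_3, \mathbf{e}_4, \mathbf{e}_5, \mathbf{e}_6$. The left-hand side is immediate: $M_x = f_x\mathbf{I} + a_x\mathbf{e}_1 + b_x\mathbf{e}_2 + c_x\mathbf{e}_3$, so in particular the coefficients of $\mathbf{e}_4, \mathbf{e}_5, \mathbf{e}_6$ on the left vanish. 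The right-hand side requires expanding the nine matrix products $\mathbf{e}_i\mathbf{e}_j$ arising in both $M(u\mathbf{e}_1 + v\mathbf{e}_2 + w\mathbf{e}_3)$ and $(u_1\mathbf{e}_1 + v_1\mathbf{e}_2 + w_1\mathbf{e}_3)M$, which is precisely what Lemma \ref{prodrelations} provides.

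First I would collect the coefficients of $\mathbf{e}_4, \mathbf{e}_5, \mathbf{e}_6$ on the right. By Lemma \ref{prodrelations}, among the nine products only $\mathbf{e}_1\mathbf{e}_3$, $\mathbf{e}_2\mathbf{e}_1$ and $\mathbf{e}_3\mathbf{e}_2$ carry such components. In $M(u\mathbf{e}_1+v\mathbf{e}_2+w\mathbf{e}_3)$ they contribute $aw\mathbf{e}_4$, $bu\mathbf{e}_5$, $cv\mathbf{e}_6$, while in $(u_1\mathbf{e}_1+v_1\mathbf{e}_2+w_1\mathbf{e}_3)M$ they contribute $cu_1\mathbf{e}_4$, $av_1\mathbf{e}_5$, $bw_1\mathbf{e}_6$. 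Equating the difference to zero yields exactly (\ref{e456eqns}).

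Next I would read off the coefficients of $\mathbf{e}_1$, $\mathbf{e}_2$ and $\mathbf{e}_3$. For $\mathbf{e}_1$ the terms of type $f\mathbf{I}\cdot u\mathbf{e}_1$ give $f(u-u_1)$, while the $\frac13$-contributions from Lemma \ref{prodrelations} combine into $\tfrac{1}{3}[a(v+w-v_1-w_1) - (b+c-a)(u_1-u)]$. Matching to $a_x$ produces the first equation of (\ref{e123eqns}); the remaining two follow by the same bookkeeping (or, more cheaply, by the cyclic symmetry $(a,b,c,u,v,w,u_1,v_1,w_1)\mapsto(b,c,a,v,w,u,v_1,w_1,u_1)$, which is manifestly respected by the relations in Lemma \ref{prodrelations}).

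Finally, for the $\mathbf{I}$-coefficient, the diagonal products $\mathbf{e}_i\mathbf{e}_i$ contribute $\frac{2}{9}$ and the off-diagonals contribute $-\frac{1}{9}$; summing these and matching to $f_x$ gives, after regrouping the terms of like functions, the expression (\ref{ieqn}). The main obstacle is purely bookkeeping---making sure no $\frac{1}{9}$ or $\frac{1}{3}$ contribution is dropped and that all products are taken in the correct order, since $\mathbf{e}_i\mathbf{e}_j\neq \mathbf{e}_j\mathbf{e}_i$ is precisely what distinguishes $M(u\mathbf{e}_1+v\mathbf{e}_2+w\mathbf{e}_3)$ from $(u_1\mathbf{e}_1+v_1\mathbf{e}_2+w_1\mathbf{e}_3)M$. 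Since $\mathbf{I},\mathbf{e}_1,\ldots,\mathbf{e}_6$ are linearly independent elements of $\mathfrak{B}(\infty)$, vanishing of the total coefficient of each exhausts the content of (\ref{Dcompat}), so no further equations arise.
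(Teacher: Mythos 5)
Your proposal is correct and follows exactly the paper's own argument: substitute the ansatz into (\ref{Dcompat}), expand via Lemma \ref{prodrelations}, and equate coefficients of $\mathbf{e}_4,\mathbf{e}_5,\mathbf{e}_6$, then $\mathbf{e}_1,\mathbf{e}_2,\mathbf{e}_3$, then $\mathbf{I}$ to obtain (\ref{e456eqns}), (\ref{e123eqns}) and (\ref{ieqn}) respectively. Your coefficient bookkeeping (including the signs in the $\mathbf{e}_4,\mathbf{e}_5,\mathbf{e}_6$ balance and the use of cyclic symmetry for the remaining $\mathbf{e}_i$ equations) checks out, so nothing further is needed.
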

\begin{proof}

The compatibility condition implies that $M$ satisfies the equation
\begin{equation*}
M_x=M\left(u \mathbf{e}_1+v \mathbf{e}_2+w \mathbf{e}_3\right)-\left(u_1 \mathbf{e}_1+ v_1 \mathbf{e}_2+ w_1 \mathbf{e}_3\right)M.
\end{equation*} 
Upon substitution of the ansatz $(\ref{ansatz})$ and an application of Lemma 
\ref{prodrelations}, we arrive at the stated equations.
From the coefficients of $\mathbf{e}_4$, $\mathbf{e}_5$ and $\mathbf{e}_6$ we obtain the system of equations (\ref{e456eqns}).
From the coefficients of  $\mathbf{e}_1$, $\mathbf{e}_2$ and $\mathbf{e}_3$ we have the equations (\ref{e123eqns}). Finally the coefficient of $\textbf{I}$ yields (\ref{ieqn}).
\end{proof}

In what follows we consider reductions of the above general system. In particular we will make use of first integrals of the system to reduce it to two components. Towards this aim we make use of the following lemma.
\begin{lemma} \label{lemmaDet}
For $M$ as given in (\ref{ansatz}), the determinant of $M$ is constant with respect to $x$ and is an automorphic function of $\lambda$.
\end{lemma}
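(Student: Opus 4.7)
The plan is to prove the two claims of the lemma separately and directly from the ansatz (\ref{ansatz}) together with the compatibility equation (\ref{Dcompat}).

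For constancy of $\det M$ with respect to $x$, I would use Jacobi's formula
\[
(\det M)_x \,=\, (\det M)\cdot \tr(M^{-1}M_x).
\]
Right-multiplying the compatibility equation (\ref{Dcompat}) by $M^{-1}$ gives
\[
M_xM^{-1} \,=\, M(u\mathbf{e}_1+v\mathbf{e}_2+w\mathbf{e}_3)M^{-1} - (u_1\mathbf{e}_1+v_1\mathbf{e}_2+w_1\mathbf{e}_3).
\]
Taking traces and using the cyclic property, the trace of the right-hand side equals $\tr(u\mathbf{e}_1+v\mathbf{e}_2+w\mathbf{e}_3) - \tr(u_1\mathbf{e}_1+v_1\mathbf{e}_2+w_1\mathbf{e}_3)$. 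Since by construction $\mathbf{e}_1,\mathbf{e}_2,\mathbf{e}_3\in\ssl_3(\bbbc(\lambda))$ are traceless (which can also be verified by inspection of the explicit matrices displayed above), both traces vanish, so $\tr(M^{-1}M_x)=0$ and hence $(\det M)_x=0$.

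For the second claim, the key point is that $\mathbf{I}$ and $\mathbf{e}_1,\mathbf{e}_2,\mathbf{e}_3$ all belong to the automorphic associative algebra $\mathfrak{B}(\infty)$, so they are $\bbbt$-invariant as matrix-valued functions of $\lambda$; the coefficients $f,a,b,c$ depend on $x$ but not on $\lambda$. Hence the whole matrix $M$ is $\bbbt$-invariant, that is, for every $g\in\bbbt$ we have $g(M)=M$. Writing this out for the generators,
\[
M(\lambda) \,=\, Q_s M\bigl(\sigma_s^{-1}(\lambda)\bigr)Q_s^{-1}, \qquad M(\lambda) \,=\, Q_r M\bigl(\sigma_r^{-1}(\lambda)\bigr)Q_r^{-1},
\]
and taking determinants eliminates the conjugation factors $Q_s,Q_r$, yielding $\det M(\lambda) = \det M(\sigma_s^{-1}(\lambda)) = \det M(\sigma_r^{-1}(\lambda))$. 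Since $\det M$ is manifestly a rational function of $\lambda$ and is invariant under the fractional linear group generated by $\sigma_s$ and $\sigma_r$, it is an automorphic function of $\lambda$.

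Neither step presents a serious obstacle; the proof is essentially a consequence of the two structural facts about the ansatz, namely that its building blocks are traceless and $\bbbt$-invariant. The only thing one should verify carefully is that the coefficient functions $f,a,b,c$ in the ansatz are treated as $\lambda$-independent (so the $\bbbt$-invariance of $M$ follows from that of $\mathbf{I},\mathbf{e}_1,\mathbf{e}_2,\mathbf{e}_3$), which is implicit in the form of (\ref{ansatz}) and will be needed in any case for the subsequent reductions.
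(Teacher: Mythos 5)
Your proof is correct and follows essentially the same route as the paper: Jacobi's formula combined with the tracelessness of $\mathbf{e}_1,\mathbf{e}_2,\mathbf{e}_3$ for the $x$-constancy, and the $\bbbt$-invariance of $M$ (inherited from its building blocks) with determinants killing the conjugation for automorphicity. The only cosmetic difference is that you compute $\tr(M_xM^{-1})$ where the paper uses $\tr(M^{-1}M_x)$, which changes nothing.
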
 
\begin{proof}
For any invertible matrix $M$ the following identity holds due to Jacobi's formula
\begin{displaymath}\partial_x \ln \det (M)\,=\, \text{Tr} \left(M^{-1}M_x \right).\end{displaymath}
Moreover as $M$ is a Darboux matrix for $L$ it follows that
\begin{equation*}
M^{-1}M_x=\left(u \textbf{e}_1+ v \textbf{e}_2+ w \textbf{e}_3\right)-M^{-1}\left(u_1 \textbf{e}_1+v_1 \textbf{e}_2+w_1 \textbf{e}_3\right)M.
\end{equation*}
The right hand side of the above equation consists of the difference of traceless matrices and so is necessarily traceless, hence 
combining the above two relations we conclude that
\begin{equation*}
\partial_x\det (M)=0.
\end{equation*}
The fact that $\det (M)$ is an automorphic function of $\lambda$ is a consequence of its form. By construction $M$ is invariant with respect to $\mathbb{T}$. That is for all $g\in\mathbb{T}$
\begin{equation*}
Q_gM(\sigma_g(\lambda ))Q_g^{-1}=M(\lambda )
\end{equation*}
where $Q_g$ and $\sigma_g$ are the automorphisms of $\mathfrak{sl}_3(\mathbb{C})$ and the Riemann sphere respectively, corresponding to the group element $g$. Taking the determinant of the above relation provides the equality
\begin{equation*}
\det(M(\sigma_g(\lambda )))=\det(M(\lambda )).\end{equation*}
This simply states that $\det(M(\lambda))$ is invariant with respect to the M\"obius transformation corresponding to $g$. As $g$ was an arbitrary element of $\mathbb{T}$ it follows that $\det(M)$ is an automorphic function of $\lambda$.
\end{proof}

In light of the above lemma, we express the determinant of $M$ in terms of an 
automorphic function. This will allow a number of the unknown functions $f$, 
$a$, $b$ and $c$ to be expressed in terms of the remaining ones. The 
Darboux matrix $M$ is constructed using the automorphic Lie algebra generators 
$\mathbf{e}_1$, $\mathbf{e}_2$ and $\mathbf{e}_3$, so it has poles at the 
$\mathbb{T}(\infty)$-orbit. This  means that we can expect $\det(M)$ to have 
poles on this particular orbit also. In what follows we will write $\det(M)$ in 
terms of the following automorphic function
\begin{equation} \label{eq:J1J2}
J(\lambda) \,=\,\frac{\lambda ^3 \left(\lambda ^3+8\right)^3}{4 \left(\lambda ^3-1\right)^3}.\,
%,\quad  \enspace J_2(\lambda) \,=\, J_1(\lambda) - J_1(1+\sqrt{3}). 
% \,=\,\frac{\left(\lambda ^6-20 \lambda ^3-8\right)^2}{4 \left(\lambda ^3-1\right)^3}.
\end{equation}
$J(\lambda)$ has the correct pole structure and has zeros at the orbit of zero. Using this function we arrive at the following lemma.

%We note that these are not generic. Although the structure of $M$ necessitates the use of automorphic functions with poles on %the orbit of infinity, we have some freedom in choosing the zeros of such a function. To see this, note that if
%$J(\lambda)$ is an automorphic function, then so is $J(\lambda)-J(\alpha)$, where the latter has zeros on the orbit of $\alpha$. %The functions $J_1$ and $J_2$ have zeros on the orbit of $0$ and $1+\sqrt{3}$ respectively. These are degenerated orbits %corresponding to fixed points of the mobius transformations that are involved in generating $\mathbb{T}$. 

\begin{lemma}
The determinant of $M$, as given in (\ref{ansatz}), can be expressed in the form
\begin{equation*}
\det{M}=F_1J(\lambda)+F_2\,,
\end{equation*}
for $F_1$ and $F_2$ constant with respect to $x$ and given by
\begin{subequations}
\begin{eqnarray*}
F_1 & =& \frac{abc}{16}\,,\label{F1}\\
F_2 & =& \frac{1}{27} (3 f-2 a+b+c) (3 f+a-2 b+c) (3 f+a+b-2 c). \label{F2}
\end{eqnarray*}
\end{subequations}
\end{lemma}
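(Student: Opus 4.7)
The plan is to combine three ingredients: (i) the $x$-independence and $\bbbt$-invariance of $\det(M)$ from Lemma \ref{lemmaDet}; (ii) a pole-divisor analysis forcing the affine shape $F_1 J + F_2$; (iii) specialisation at $\lambda=\infty$ and $\lambda=0$ to extract $F_1$ and $F_2$ in closed form.

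First I would examine the pole divisor of $\det(M)$ as a function of $\lambda$. Each of $\mathbf{e}_1,\mathbf{e}_2,\mathbf{e}_3$ has entries in $\cR(\bbbt(\infty))$ with at most simple poles at the three finite points $\{1,\omega,\omega^2\}$ of $\bbbt(\infty)$ and with entries of degree at most one in $\lambda$ at infinity. Consequently, the cubic polynomial $\det(M)$ in these entries has at most a triple pole at each finite point of $\bbbt(\infty)$ and at most a pole of order three at $\lambda=\infty$. By Lemma \ref{lemmaDet}, $\det(M)$ lies in the $\bbbt$-invariant subfield $\bbbc(\lambda)^{\bbbt}=\bbbc(J)$ (the equality follows because $J$ is a rational function of degree $12=|\bbbt|$). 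The elements of $\bbbc(J)$ with pole divisor bounded by $3\cdot\bbbt(\infty)$ form the two-dimensional linear span $\bbbc\cdot 1 \oplus \bbbc\cdot J$, since $J$ itself has pole divisor exactly $3\cdot\bbbt(\infty)$. Hence $\det(M)=F_1 J(\lambda)+F_2$ for some $F_1,F_2\in\bbbc$, and Lemma \ref{lemmaDet} guarantees these constants are independent of $x$.

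To identify $F_1$ I would analyse the asymptotics at $\lambda=\infty$. The explicit formulas give $\mathbf{e}_1\sim\tfrac{\lambda}{4}\mathbf{E}_{13}$, $\mathbf{e}_2\sim\tfrac{\lambda}{4}\mathbf{E}_{21}$, $\mathbf{e}_3\sim\tfrac{\lambda}{4}\mathbf{E}_{32}$ for the growing part, while the identity contribution and the remaining parts of the $\mathbf{e}_i$ are bounded. The $\lambda^3$-term of $\det(M)$ is therefore produced solely by the cyclic matrix $\tfrac{\lambda}{4}(a\mathbf{E}_{13}+b\mathbf{E}_{21}+c\mathbf{E}_{32})$, whose determinant is $\tfrac{abc\lambda^3}{64}$. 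Comparison with $J(\lambda)\sim\lambda^3/4$ gives $F_1=abc/16$. To identify $F_2$ I would evaluate at $\lambda=0$, a zero of $J$. Direct substitution shows that $\mathbf{e}_2(0)$ and $\mathbf{e}_3(0)$ are diagonal and $\mathbf{e}_1(0)$ is lower triangular, so that $M(0)$ is itself lower triangular with diagonal entries $\tfrac{1}{3}(3f-2a+b+c)$, $\tfrac{1}{3}(3f+a-2b+c)$ and $\tfrac{1}{3}(3f+a+b-2c)$. Their product recovers the stated formula for $F_2$.

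The only delicate step is the dimension count in the second paragraph: one must verify that the pole order at $\bbbt(\infty)$ really is bounded by three (ruling out $J^2$), that no poles appear away from $\bbbt(\infty)$, and that the space of $\bbbt$-automorphic functions with this divisor is exactly $\bbbc\cdot 1\oplus\bbbc\cdot J$. Once this structural statement is in place, the determination of $F_1$ and $F_2$ reduces to two short specialisations with the explicit matrices.
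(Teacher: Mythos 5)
Your proof is correct, but it takes a genuinely different route from the paper's, whose entire proof is that the result ``follows by direct computation and the use of Lemma \ref{lemmaDet}'' --- i.e.\ one expands $\det M$ explicitly as a rational function of $\lambda$ and recognises it as $F_1J+F_2$. You instead establish the shape $F_1J+F_2$ a priori: since every entry of $\mathbf{e}_1,\mathbf{e}_2,\mathbf{e}_3$ has at most a simple pole at each point of $\bbbt(\infty)$ (including $\infty$, where the entries are $O(\lambda)$) and is regular elsewhere, $\det M$ has pole divisor bounded by $3\cdot\bbbt(\infty)$; combined with $\bbbc(\lambda)^{\bbbt}=\bbbc(J)$ (valid because $J$ is invariant of degree $12=|\bbbt|$) and the fact that $J$ ramifies to order $3$ over $\infty$ with fibre exactly $\bbbt(\infty)$, this forces $\det M$ to be an affine function of $J$. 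The two coefficients then come from two specialisations, both of which I have checked against the explicit matrices: at $\lambda\to\infty$ only the permutation picking out the three growing entries $\tfrac{\lambda}{4}(a\mathbf{E}_{13}+b\mathbf{E}_{21}+c\mathbf{E}_{32})$ contributes to the $\lambda^3$ term, and comparison with $J\sim\lambda^3/4$ gives $F_1=abc/16$; at $\lambda=0$ (a zero of $J$) the matrix $M(0)$ is indeed lower triangular and the product of its diagonal entries is the stated $F_2$ (your list of diagonal entries is a cyclic permutation of the actual $(1,1),(2,2),(3,3)$ entries, which of course does not affect the product). Your approach costs the extra structural input about the invariant function field, but in exchange it explains \emph{why} $\det M$ must be affine in $J$ and replaces the expansion of a $3\times 3$ determinant of rational functions by two short evaluations; the paper's direct computation is quicker to state but offers no such insight.
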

\begin{proof}
Result follows by direct computation and the use of Lemma \ref{lemmaDet}.
\end{proof}
As $F_1$ and $F_2$ are constants, they can be used to determine two of the unknown functions $f$, $a$, $b$ and $c$ in terms of the remaining functions. Towards this aim we may view $F_1$ and $F_2$ as elements of the ring $\mathbb{C}[a,b,c,f]$, then determining $(a,b,c,f)$ is equivalent to finding elements of the algebraic variety $V(F_1-\alpha,F_2-\beta)=V(F_1-\alpha)\cap V(F_2-\beta)$ where $\alpha$ and $\beta$ are constants in $\mathbb{C}$. The variety $V(F_1-\alpha)$ is irreducible for $\alpha\neq 0$, the case $\alpha=0$ implies at least one of $a$, $b$ or $c$ is zero, which from (\ref{e456eqns}) implies $a=b=c=0$, hence leading to trivial $M$. Thus we require $\alpha\neq 0$. For simplicity we may take $\alpha=\frac{1}{16}$, then
\begin{equation*}
\label{eq:use-F1}
c=\frac{1}{ab}.
\end{equation*} 
For general $\beta$ one finds that $V(F_2-\beta)$ is irreducible. Hence to determine $f$ in terms of $a,b$ and $c$ one would have to solve a cubic equation. However it can be seen directly that for $\beta=0$ the variety $V(F_2-\beta)$ becomes the union of three irreducible components
\begin{equation}\label{j1fact}
V(F_2)=V(3 f-2 a+b+c)\cup V(3 f+a-2 b+c)\cup V(3 f+a+b-2 c).
\end{equation}
Hence three distinct Darboux transformations can be obtained by choosing $f=f_1,f_2,f_3$ such that $(a,b,c,f_i)\in V_i$ where the $V_i$ are the irreducible components given above. As the components are linear, $f$ is expressed rationally in terms of $a$, $b$ and $c$. 

For $\beta=-abc=-1$ one finds that
\begin{equation}\label{j2fact}
V(F_2-\beta)=V(3f+a+b+c)\cup V(9f^2-3f(a+b+c)-2(a^2+b^2+c^2)+5(a b+b c+ c a))
\end{equation}
By choosing $f$ such that $(a,b,c,f)\in V(3f+a+b+c)$ we obtain a further way to express $f$ rationally in terms of $a$, $b$ and $c$.
\begin{remark}Expressing $\det(M)$ in terms of $J_2$, an automorphic function which has poles in the orbit of infinity and zeros in the orbit of $1+\sqrt 3$
\begin{equation*}
J_2=J(\lambda)-J(1+\sqrt 3)=\frac{\left(\lambda ^6-20 \lambda ^3-8\right)^2}{4 \left(\lambda ^3-1\right)^3}
\end{equation*}
one finds that $\det(M)=(F_1+abc)+F_2J_2$. Setting $(F_1+abc)=0$ leads to (\ref{j2fact}).
\end{remark}
Focusing on the irreducible components in (\ref{j1fact}), making each choice of $f$ gives the following forms for $M$ 
\begin{align*}
M_{(1)} & =\frac{2a-b-c}{3}\,\textbf{I}+a\textbf{e}_1+b\textbf{e}_2+c \textbf{e}_3\\
M_{(2)} & =\frac{2b-a-c}{3}\,\textbf{I}+a\textbf{e}_1+b\textbf{e}_2+c \textbf{e}_3\\
M_{(3)} & =\frac{2 c-a-b}{3}\,\textbf{I}+a\textbf{e}_1+b\textbf{e}_2+c \textbf{e}_3
\end{align*}
where $c=1/ab$ and $a$ and $b$ are determined as in Theorem \ref{dsimplepoles}. However it becomes clear from (\ref{F2}) that the permutation 
\begin{equation*} \label{eq:per-mnl}
\pi : (a,b,c) \mapsto (b,c,a), \quad \pi^3 = {\rm{id}}
\end{equation*}
 maps one factor of $F_2$ into another. Letting $T$ denote the Lie algebra automorphism,
\begin{equation*}
T: A\mapsto M_{(0)}AM_{(0)}^{-1}
\end{equation*}
it is easy to see that
\begin{equation}\label{Drelations}
M_{(2)}=\pi\left(T(M_{(1)})\right),\enspace M_{(3)}=\pi\left(T(M_{(2)})\right), \enspace M_{(1)}=\pi\left(T(M_{(3)})\right).
\end{equation}
Thus $M_{(1)}$, $M_{(2)}$ and $M_{(3)}$ are related by conjugation by $M_{(0)}$ and a cyclic relabelling of the functions $a$, $b$ and $c$. 

Turning attention to the irreducible components in (\ref{j2fact}) we see that each component is invariant with respect to $\pi$, so in particular denoting 
\begin{equation}
M_{(4)}\,=\,-\, \frac{a+b+c}{3}\,\textbf{I}+a\textbf{e}_1+b\textbf{e}_2+c \textbf{e}_3
\end{equation}
it follows that
\begin{equation*}
\pi\left(T(M_{(4)})\right)=M_{(4)}.
\end{equation*}

By expressing $f$ and $c$ in terms of the remaining functions using $F_1$ and $F_2$ as described above, their presence can be removed from Theorem \ref{dsimplepoles}. First integrals were used to determine $f$ and $c$, therefore the reduction is compatible with the original system. In order to separate the different cases, we use different indices for the updated functions corresponding to each Darboux transformation. In particular, $u_i$ and $v_i$ denote the updated functions corresponding to $M_{(i)}$, $i=1,\cdots,4$. This separation is also convenient for later use when we discuss fully discrete systems. 

\begin{theorem}\label{nonlocalsyms} $M_{(i)}$ for $i=1, \ldots ,4,$ is a Darboux transformation for $L$ provided
\begin{equation}a = c u u_i v,\enspace b= c u_i v v_i,\enspace c^3 u u_i^2 v^2 v_i=16\end{equation}
and the following relations hold:

\begin{subequations}\label{nonLocalSyms1}
\begin{align}
& \frac{v_{1,x}}{v_1}-\frac{u_{x}}{u}-\frac{u v}{v_1}+u+v-v_1=0\\
& \frac{v_{x}}{v}+\frac{u_{1,x}}{u_{1}}+\frac{u_{x}}{u}-\frac{1}{u v}-\frac{u v}{v_1}+\frac{1}{u_1 v_1}+u_1=0
\end{align}
\end{subequations}
\begin{subequations}\label{nonLocalSyms2}
\begin{align}
& \frac{v_{2,x}}{v_2}-\frac{u_{x}}{u}-\frac{u_2 v_2}{u}-u+u_2+v_2=0\\
& \frac{v_{x}}{v}+\frac{u_{2,x}}{u_{2}}+\frac{u_{x}}{u}-\frac{u_2 v_2}{u}-u+u_2+v_2=0
\end{align}
\end{subequations}
\begin{subequations}\label{nonLocalSyms3}
\begin{align}
&  \frac{v_{3,x}}{v_3}-\frac{u_{x}}{u}-\frac{1}{u v}+\frac{2}{u_3 v}-\frac{1}{u_3 v_3}-u+u_3+v-v_3=0\\
& \frac{v_{x}}{v}+\frac{u_{3,x}}{u_{3}}+\frac{u_{x}}{u}+\frac{1}{u v}-\frac{1}{u_3 v}+u-u_3=0
\end{align}
\end{subequations}
\begin{subequations}\label{nonLocalSyms4}
\begin{align}
& \frac{v_{4,x}}{v_4}-\frac{u_{x}}{u}-u+u_4+v-v_4=0\\
&  \frac{v_{x}}{v}+\frac{u_{4,x}}{u_{4}}+\frac{u_{x}}{u}-\frac{1}{u v}+\frac{1}{u_4 v_4}+u-u_4=0
\end{align}
\end{subequations}
\end{theorem}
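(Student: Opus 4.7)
The plan is to reduce the general system of Theorem~\ref{dsimplepoles} to the stated two-component equations by using the determinantal first integrals $F_1$ and (an irreducible factor of) $F_2$ or $F_2+1$ to eliminate two of the four functions $a,b,c,f$.

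First I handle the algebraic content. The three relations (\ref{e456eqns}), combined with $w=(uv)^{-1}$ and $w_i=(u_iv_i)^{-1}$, are equivalent to the substitutions $a=c u u_i v$ and $b=c u_i v v_i$; the third equation of (\ref{e456eqns}) then becomes automatic. Feeding these expressions into the first integral $F_1 = abc/16$ and fixing $F_1$ to its chosen normalisation value yields the third stated relation $c^3 u u_i^2 v^2 v_i = 16$.

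For each $i=1,2,3,4$ I then select the linear expression for $f$ in terms of $a,b,c$ coming from one of the irreducible components (\ref{j1fact}) or the component $V(3f+a+b+c)$ from (\ref{j2fact}), namely $3f_1=2a-b-c$, $3f_2=2b-a-c$, $3f_3=2c-a-b$ and $3f_4=-(a+b+c)$. Each such choice, together with the algebraic substitutions above, determines $a,b,c,f$ as explicit functions of $u,v,u_i,v_i$ (up to a cube root from the constraint on $c$). I then substitute into (\ref{e123eqns}) and (\ref{ieqn}). Dividing each of the $a_x,b_x,c_x$ equations by the corresponding variable converts them into relations among logarithmic derivatives; expressing $(\ln a)_x$ and $(\ln b)_x$ via $(\ln c)_x$ together with the logarithmic derivatives of $u,v,u_i,v_i$, and eliminating $(\ln c)_x$ by differentiating the constraint $c^3 u u_i^2 v^2 v_i = 16$, the four differential relations should reduce to exactly the two stated equations of the theorem in each case; equation (\ref{ieqn}) becomes a consequence of (\ref{e123eqns}) once $f$ is substituted linearly in $a,b,c$.

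The principal obstacle is bookkeeping rather than conceptual: tracking cancellations carefully enough to confirm that the four differential equations reduce to exactly two, and that these two coincide with the clean form stated. A useful shortcut is provided by the symmetry (\ref{Drelations}): once the case $i=1$ is verified in detail, the cases $i=2,3$ should follow by applying the cyclic permutation $\pi\circ T$ and relabelling, so that only $i=4$, which is invariant under this symmetry, requires an independent computation.
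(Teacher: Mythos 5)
Your proposal is correct and follows essentially the same route as the paper, whose proof of this theorem is simply the remark that it ``follows from Theorem \ref{dsimplepoles} by making the previously described reductions'' --- i.e.\ exactly the elimination of $a,b,c,f$ via (\ref{e456eqns}), the first integrals $F_1$, $F_2$, and the linear factors in (\ref{j1fact})--(\ref{j2fact}) that you describe, with your observations that (\ref{ieqn}) becomes a consequence of (\ref{e123eqns}) and that the cases $i=2,3$ follow from $i=1$ via (\ref{Drelations}) being accurate (if unstated) refinements of the same argument.
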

\begin{proof}
The theorem follows from Theorem \ref{dsimplepoles} by making the previously described reductions.
\end{proof}

\begin{remark}\label{pointTransRemark}
As previously stated, the above Darboux transformations are related via (\ref{Drelations}), this relation can be seen at the level of the systems above. In particular the systems corresponding to $M_{(1)}$ and $M_{(2)}$ are related by the point transformation 

\begin{equation}\label{pointTransUV}
(u,v)\mapsto (v,u^{-1}v^{-1})
\end{equation}
and a relabelling of index. The same is true for the pairs of systems corresponding to $M_{(2)}$ and $M_{(3)}$, also for $M_{(3)}$ and $M_{(1)}$.  
\end{remark}

%(Expressions left here to be copy and pasted elsewhere)
%\begin{align}& M_{(1)}=c\left(\frac{2 u u_1 v-u_1 v v_1 -1}{3}\,\textbf{I}+u u_1 v \textbf{e}_1+u_1 v v_1\textbf{e}_2+ %\textbf{e}_3\right)\\
%&  M_{(2)}=c\left(\frac{2u_2 v v_2- u u_2 v -1}{3}\,\textbf{I}+u u_2 v \textbf{e}_1+u_2 v v_2\textbf{e}_2+ %\textbf{e}_3\right)\\
%&  M_{(3)}=c\left(\frac{2 u u_3 v-u_3 v v_3 -1}{3}\,\textbf{I}+u u_3 v \textbf{e}_1+u_3 v v_3\textbf{e}_2+ %\textbf{e}_3\right)\\
%& M_{(4)}=c\left(-\,\frac{u_4 u v+ u_4 v_4 v+1}{3}\textbf{I}+u u_4 v \textbf{e}_1+u_4 v v_4 \textbf{e}_2+ %\textbf{e}_3\right),
%\end{align}

\section{Darboux Lax representation for local symmetries}
\label{localSymDerivation}
The derived Darboux matrices may be regarded as defining shifts on a lattice. To make it clear, we may impose that the functions $u$ and $v$ depend on a set of discrete variables $k_i$, $i=1,\ldots,4$, and interpret the updated functions $u_i$ and $v_i$ as shifts in the corresponding lattice direction. In this interpretation, the Lax operator and its Darboux transformations define Darboux-Lax representations (semi-discrete Lax pairs) of the corresponding systems in Theorem (\ref{nonlocalsyms}). These systems are then viewed as differential difference equations. Here and in what follows, let ${\cal{S}}_i$ denote the shift operator in the $i$-th direction, i.e. ${\cal{S}}^j_i(f) = f(k_i+j)$. Then explicitly, the differential difference relations corresponding to $M_{(i)}$ are equivalent to the compatibility of the following system
$${\cal{S}}_i (\Psi)  = M_{(i)} \Psi,\quad \Psi_x = - \left(u \mathbf{e}_1+v \mathbf{e}_2+w \mathbf{e}_3 \right) \Psi. $$
From this viewpoint equations in (\ref{nonlocalsyms}) are integrable. However they are not evolutionary.

We now turn our attention to deriving evolutionary differential difference equations associated with each $M_{(i)}$. The problem then becomes to find continuous flows  $\Psi_{\tau^i}= \Omega_{(i)} \Psi$  such that the compatibility condition  
\begin{equation}\label{compatEvol}
\partial_{\tau^i} {M_{(i)}}\,=\,{\cal{S}}_i \left(\Omega_{(i)}\right) M_{(i)} - M_{(i)} \Omega_{(i)}\,,
\end{equation}
is equivalent to an evolutionary system of differential difference equations. We tackle this problem by taking an ansatz for $\Omega_{(i)}$ of the following form

\begin{equation} \label{eq:Om}
\Omega_{(i)} = M_{(i)}^{-1}(g {\rm{I}} + p  \mathbf{e}_1  + q  \mathbf{e}_2  + r  \mathbf{e}_3).
\end{equation}
This ansatz is based upon the observations  that sometimes the Lax operator 
corresponding to a local symmetry can be written in this form \cite{X}. To 
reduce the number of unknown functions in $\Omega_{(i)}$ we make use of the 
following lemma.

\begin{lemma}\label{traceConstant}
We may assume $\Omega_{(i)}$ has constant trace with respect to $\tau^i$.
\end{lemma}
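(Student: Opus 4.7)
The strategy is to exploit the scalar gauge freedom $\Psi \mapsto e^{\int h\, d\tau^i}\, \Psi$, which shifts $\Omega_{(i)} \mapsto \Omega_{(i)} + h\, \mathbf{I}$ at the level of the Lax operator. First I would observe that such a shift leaves the compatibility condition \eqref{compatEvol} invariant whenever $h = h(\tau^i,\lambda)$ is independent of $x$ and of the lattice variable $k_i$: since $\mathbf{I}$ commutes with $M_{(i)}$ and $\cS_i h = h$, the extra contribution on the right-hand side is $(\cS_i h - h) M_{(i)} = 0$, while the left-hand side $\partial_{\tau^i} M_{(i)}$ is unaffected. The ansatz \eqref{eq:Om} is preserved as well, since
\[ h\, \mathbf{I} \;=\; M_{(i)}^{-1}\bigl(h M_{(i)}\bigr) \;=\; M_{(i)}^{-1}\bigl(hf\, \mathbf{I} + ha\, \mathbf{e}_1 + hb\, \mathbf{e}_2 + hc\, \mathbf{e}_3\bigr), \]
so the shift is absorbed into the redefinition $(g,p,q,r) \mapsto (g+hf,\, p+ha,\, q+hb,\, r+hc)$.

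The main step is to show that $\tr \Omega_{(i)}$ depends only on $\tau^i$ and $\lambda$. For its $x$-independence I would take the trace of the compatibility between $\Psi_x = -U_L\, \Psi$ and $\Psi_{\tau^i} = \Omega_{(i)}\, \Psi$, namely $-\partial_{\tau^i} U_L = \partial_x \Omega_{(i)} + [U_L, \Omega_{(i)}]$; since $U_L = u\, \mathbf{e}_1 + v\, \mathbf{e}_2 + w\, \mathbf{e}_3$ lies in $\ssl_3(\bbbc(\lambda))$, both $U_L$ and the commutator are traceless, which forces $\partial_x \tr \Omega_{(i)} = 0$. For $k_i$-independence I would apply Jacobi's formula to \eqref{compatEvol}, obtaining
\[ \partial_{\tau^i} \ln \det M_{(i)} \;=\; \cS_i(\tr \Omega_{(i)}) \,-\, \tr \Omega_{(i)}. \]
By Lemma \ref{lemmaDet}, $\det M_{(i)} = F_1 J(\lambda) + F_2$ with $F_1, F_2$ fixed by the normalisation chosen in Section \ref{deriveDarbouxs}; since the $\tau^i$-flow must preserve this determinant in order to keep $M_{(i)}$ in its Darboux class, the left-hand side vanishes and $\tr \Omega_{(i)}$ is $\cS_i$-invariant, hence independent of $k_i$.

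With $\tr \Omega_{(i)}$ now established to be of the form $\phi(\tau^i, \lambda)$, I would finish by choosing $h(\tau^i, \lambda) = -\tfrac{1}{3}\, \phi(\tau^i, \lambda)$; the gauge-equivalent $\tilde\Omega_{(i)} := \Omega_{(i)} + h\, \mathbf{I}$ then satisfies $\tr \tilde\Omega_{(i)} \equiv 0$, which is in particular constant in $\tau^i$. The only subtle point in the argument is justifying the $\tau^i$-invariance of $\det M_{(i)}$; this is a consistency requirement built into the Lax--Darboux construction, amounting to insisting that the evolutionary flow preserves the chosen irreducible component of $V(F_2)$ identified in Section \ref{deriveDarbouxs}. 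All remaining manipulations are routine bookkeeping.
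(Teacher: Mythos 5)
Your argument is correct and follows essentially the same route as the paper: Jacobi's formula applied to the discrete compatibility condition, together with the fact that the normalisation of $F_1$ and $F_2$ fixes $\det M_{(i)}$ as a function of $\lambda$ alone (hence constant in $\tau^i$), gives $\left(\cS_i - {\rm I}\right)\bigl({\rm Tr}(\Omega_{(i)})\bigr) = 0$, after which one subtracts a suitable multiple of the identity and checks that the ansatz (\ref{eq:Om}) is preserved via $k\,{\rm I} = M_{(i)}^{-1}(k M_{(i)})$. Your additional verification that ${\rm Tr}(\Omega_{(i)})$ is independent of $x$, and the framing of the shift as a scalar gauge transformation of $\Psi$, are harmless refinements of the paper's proof rather than a different method.
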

\begin{proof}
Due to the choice of $F_1$ and $F_2$ taken when deriving each $M_{(i)}$, it follows that $\det{(M_{(i)})}$ is constant with respect to $\tau_i$. This fact along with Jacobi's identity yields $\left( {\cal{S}}_i-{\rm{I}}\right) \left({\rm{Tr}}(\Omega_{(i)})\right) = 0$. Hence ${\rm{Tr}}(\Omega_i)$ is constant with respect to ${\cal{S}}_i$. Now consider the transformation
$$\tilde{\Omega}_{(i)}\mapsto \Omega_{(i)}-\frac{({\rm{Tr}}(\Omega_i)-\alpha)}{3}{\rm{I}}$$
where $\alpha$ is constant with respect to $\tau^i$ and the shift operator ${\cal{S}}_i$. If $\Omega_{(i)}$ satisfies (\ref{compatEvol}), then as ${\rm{Tr}}(\Omega_{(i)})$ is constant with respect to ${\cal{S}}_i$ it follows that $\tilde{\Omega}_i$ also satisfies (\ref{compatEvol}). Hence we may replace $\Omega_{(i)}$ by $\tilde{\Omega}_{(i)}$ to obtain an operator with constant trace. Moreover for constant $k$ 
\begin{align*}
\Omega_{(i)}-k{\rm{I}} & = M_{(i)}^{-1}(g {\rm{I}} + p \mathbf{e}_1 + q\mathbf{e}_2 + r \mathbf{e}_3)-k{\rm{I}}\\
& = M_{(i)}^{-1}(g {\rm{I}} + p \mathbf{e}_1 + q\mathbf{e}_2 + r \mathbf{e}_3)-k{\rm{I}}\\
& = M_{(i)}^{-1}(g {\rm{I}} + p \mathbf{e}_1 + q\mathbf{e}_2 + r \mathbf{e}_3)-kM^{-1}_{(i)}M_{(i)}\\
& = M^{-1}_{(i)}((g-f k){\rm{I}}+(p-k a)\mathbf{e}_1+(q-k b)\mathbf{e}_2+(r-k c)\mathbf{e}_2).
\end{align*}
where $a$, $b$, $c$ and $f$ are as in Theorem \ref{nonlocalsyms}. Hence by relabelling appropriately $\tilde{\Omega}_{(i)}$ is of the form (\ref{eq:Om}). 
\end{proof}

Armed with the previous lemma we proceed to solve the compatibility condition for $\Omega_{(i)}$. This leads to the following result.
 Note that negative indices denote backward shifts, i.e. $u_{-i} = u(k_i - 1)$, $v_{-i} = v(k_i-1)$.
\begin{theorem}\label{theoremLocal}$\Omega_{(1)}$ is determined by the relations

\begin{align*}
 r & = -\,\frac{p}{u u_1 v}\,-\,\frac{q}{u_1 v v_1}\,,\quad g  =\frac{u u_1 v v_1}{(u u_1 v-1) (u-v_1)}\,+\,\frac{2 p -q-r}{3}\,,\\
p & =\frac{1}{3}\,\frac{u u_1 v}{u -v_1}\left(\,\frac{2 v_1}{u u_1 v-1}- \frac{u v + (u_{-1}-2 v) v_1}{(u_{-1} u v_{-1}-1) (u_{-1}-v)}\, \right), \\
q & = \frac{1}{3}\,\frac{u_1 v v_1}{u -v_1}\left( \,\frac{ 2 u v-(u_{-1}+ v) v_1}{(u_{-1} u v_{-1}-1) (u_{-1}-v)}-\frac{v_1}{u u_1 v-1}\, \right).
\end{align*}
The corresponding differential difference system is given by
\begin{subequations}
\label{localsym1}
\begin{align} 
u_{\tau^1} & = u({\cal{S}}_1-{\rm{I}})\frac{ v}{ \left(v-u_{-1}\right) \left(u u_{-1} v_{-1}-1\right)}\,\\
v_{\tau^1} & = \frac{v}{u u_{-1} v_{-1}-1}({\cal{S}}_1-{\rm{I}})\frac{u_{-1}}{u_{-1}-v}. 
\end{align}
\end{subequations}
\end{theorem}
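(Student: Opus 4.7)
The plan is to substitute the ansatz $\Omega_{(1)}=M_{(1)}^{-1}(g\,\mathbf{I}+p\,\mathbf{e}_1+q\,\mathbf{e}_2+r\,\mathbf{e}_3)$ into the compatibility condition (\ref{compatEvol}) and extract the resulting relations. Multiplying (\ref{compatEvol}) on the left by $\cS_1(M_{(1)})$ yields the equivalent identity
\[\cS_1(M_{(1)})\,\partial_{\tau^1}M_{(1)}+\cS_1(M_{(1)})\,M_{(1)}\,\Omega_{(1)}=\cS_1(M_{(1)}\,\Omega_{(1)})\,M_{(1)},\]
in which, by construction, $M_{(1)}\,\Omega_{(1)}=g\mathbf{I}+p\mathbf{e}_1+q\mathbf{e}_2+r\mathbf{e}_3\in\cB_1$. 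Both sides are therefore products of elements of $\cB_1$, and by the quasi-graded structure $\cB_1\cB_1\subset\cB_3\oplus\cB_2\oplus\cB_1$ together with Lemma \ref{prodrelations}, each side can be expanded as a $\bbbc[J]$-linear combination of the basis $\{\mathbf{I},\mathbf{e}_1,\ldots,\mathbf{e}_8\}$.

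I would then equate coefficients basis-element by basis-element and power-of-$J$ by power-of-$J$. The contributions to $\mathbf{e}_4,\mathbf{e}_5,\mathbf{e}_6$ come purely from the noncommutative part of the product (by Lemma \ref{prodrelations}), so they produce algebraic equations in $g,p,q,r$ and the fields $u,v,u_1,v_1$; combined with the trace normalisation from Lemma \ref{traceConstant}, these determine $r$ as the stated linear combination of $p$ and $q$ and express $g$ rationally in the remaining unknowns. The equations from the $\mathbf{I}$ coefficient and the highest $J$-degree then pin down $p$ and $q$; the appearance of the backward shifts $u_{-1},v_{-1}$ is forced by the lattice structure, since solving the compatibility at site $k_1$ locally requires matching against its $\cS_1^{-1}$-image, producing the denominators $(u-v_1)$, $(uu_1v-1)$ and $(u_{-1}uv_{-1}-1)(u_{-1}-v)$. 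Finally, the coefficients of $\mathbf{e}_1,\mathbf{e}_2,\mathbf{e}_3$ yield the evolution equations: after substituting the determined $g,p,q,r$, they collapse to (\ref{localsym1}), and the evolution of $w=1/(uv)$ follows automatically from those of $u$ and $v$.

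The main obstacle is algebraic bookkeeping rather than any conceptual subtlety: the raw matrix equation lives in a twenty-seven-dimensional $\bbbc$-space (nine basis directions times three $J$-degrees), and simplification relies on systematic use of the Darboux relations from Theorem \ref{nonlocalsyms}, in particular $c=1/(ab)$, $3f=2a-b-c$, and $a=cuu_1v$, $b=cu_1vv_1$. Recognising the telescoping $(\cS_1-\mathrm{I})$-form of the right-hand sides in (\ref{localsym1})---which certifies locality of the resulting flow despite the apparently nonlocal intermediate expressions for $p,q,r$---is the final non-routine step.
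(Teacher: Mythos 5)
Your overall strategy---substituting the ansatz (\ref{eq:Om}) into (\ref{compatEvol}), clearing the factor $M_{(1)}^{-1}$, expanding over the basis $\{\mathbf{I},\mathbf{e}_1,\ldots,\mathbf{e}_8\}$ with $\bbbc[J]$ coefficients via Lemma \ref{prodrelations}, and invoking Lemma \ref{traceConstant}---is the same as the paper's, and the identity you obtain after left-multiplying by $\cS_1(M_{(1)})$ is correct. Your determination of $r$ and $g$ from the trace is also essentially the paper's first step: since ${\rm Tr}(\Omega_{(1)})$ is an automorphic function with poles only on $\bbbt(\infty)$'s counterpart, it must equal $\epsilon_1/J+\epsilon_2$ with both $\epsilon_i$ constant, which yields the stated expressions for $r$ and $g$.

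The genuine gap is in how you claim $p$ and $q$ are determined. Because $\cS_1(\Omega_{(1)})$ enters (\ref{compatEvol}), equating basis coefficients does \emph{not} give algebraic equations at a single site: it gives first-order difference recursions coupling $p,q,r$ with their forward shifts $p_1,q_1,r_1$ \emph{and} with the unknown flows $u_\tau,v_\tau$ (the paper's intermediate relations have the form $\frac{u_{1,\tau}}{u_1}+\frac{u_\tau}{u}+\frac{v_\tau}{v}+\frac{p}{uu_1v}+\frac{p_1}{u_1u_2v_1}-\frac{q_1}{u_2v_1v_2}-r=0$). Nothing in the $\mathbf{I}$-coefficient or the top $J$-degree ``pins down'' $p$ and $q$ at this stage. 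The essential missing step is to combine these recursions into $({\rm I}+\cS_1+\cS_1^2)\frac{u_\tau}{u}=({\rm I}+\cS_1+\cS_1^2)\bigl(\frac{pv_1-qu}{uu_1vv_1}\bigr)$ and to invert the operator ${\rm I}+\cS_1+\cS_1^2$, whose kernel $\omega^{k_1}$ (with $\omega^3=1$) must be recognised and discarded as a point symmetry; this is precisely what certifies that the flow is local, which your proposal asserts but does not justify. Only after this inversion fixes $u_\tau=\frac{pv_1-qu}{u_1vv_1}$ and the corresponding $v_\tau$ can the remaining compatibility equations be solved for $p$ and $q$; it is this back-substitution, not ``matching against the $\cS_1^{-1}$-image,'' that produces the backward shifts $u_{-1},v_{-1}$ and the denominator $(u_{-1}uv_{-1}-1)(u_{-1}-v)$. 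As written, your argument leaves an underdetermined recursion and cannot reach the closed-form expressions for $p$, $q$ or the local system (\ref{localsym1}).
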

\begin{proof}
Lemma \ref{traceConstant} gives that ${\rm{Tr}}(\Omega_{(i)})$ is constant. Moreover from the form of $\Omega_{(i)}$ it follows that its trace is an automorphic function of $\lambda$. Bearing this in mind we find that ${\rm{Tr}}(\Omega_{(i)})=\epsilon_1\frac{1}{J(\lambda)}+\epsilon_2$ where $J(\lambda)$ is given in (\ref{eq:J1J2}), while $\epsilon_1$ and $\epsilon_2$ are as follows
\begin{align*} 
\epsilon_1 & =\frac{16 \left(u u_1 v-1\right) \left(u-v_1\right) \left(3 g-2 p+q+r\right)}{3 u u_1 v v_1}\\
\epsilon_2 & = \frac{p}{u u_1 v}+\frac{q}{u_1 v v_1}+r.
\end{align*}
The above relations allow $g$ and $r$ to determined. Moving on to the compatibility condition we find that $u$ and $v$ must satisfy the differential difference equations
\begin{align*}
 & \frac{u_{1,\tau}}{u_1}+\frac{u_{\tau}}{u}+\frac{v_{\tau}}{v}+\frac{p}{u u_1 v}+\frac{p_1}{u_1 u_2 v_1}-\frac{q_1}{u_2 v_1 v_2}-r=0\\
& \frac{v_{1,\tau}}{v_1}-\frac{u_{\tau}}{u}-\frac{2 p}{u u_1 v}+\frac{p_1}{u_1 u_2 v_1}+\frac{2 q_1}{u_2 v_1 v_2}-r=0.
\end{align*}
This system implies the following condition on $u$,
\begin{equation*}
(I+\mathcal{S}_1+\mathcal{S}_1^2)\frac{u_{\tau}}{u}=\frac{2 q_1}{u_2 v_1 v_2}+\frac{q_2}{u_3 v_2 v_3}-\frac{2 p}{u u_1 v}-\frac{p_2}{u_2 u_3 v_2}-r+r_1.
\end{equation*}
\begin{equation*}\implies
(I+\mathcal{S}_1+\mathcal{S}_1^2)\frac{u_{\tau}}{u}=(I+\mathcal{S}_1+\mathcal{S}_1^2)\left(\frac{p v_1-q u}{uu_1 v v_1}\right).
\end{equation*}
The operator $(I+\mathcal{S}_1+\mathcal{S}_1^2)$ has kernel $\omega^{k_1}$ where $\omega$ is a cube root of unity, so in general from the above relation we can determine $u_{\tau}$ up to the addition of a constant multiple of $\omega^{k_1}u$. However this corresponds to the addition of a point symmetry.
We are primarily concerned with generalised symmetries and so in the following analysis we shall omit the kernel. Inverting the operator $(I+\mathcal{S}_1+\mathcal{S}_1^2)$ we obtain the equation
\begin{equation*}
u_{\tau}=\frac{p v_1-q u}{u_1 v v_1}.
\end{equation*}
Then the corresponding equation for $v$ is obtained as
\begin{equation*}
v_{\tau}=v \epsilon _2-\frac{p}{u u_1}-\frac{2 q}{u_1 v_1}.
\end{equation*}
Now returning to the compatibility condition we can solve the remaining equations to determine $p$ and $q$. Doing so we find that
\begin{align*}
p & =\frac{u u_1 v \epsilon _1 \left(u v+u_{-1} v_1-2 v_1 v\right)}{48 \left(u_{-1} u v_{-1}-1\right) \left(u_{-1}-v\right) \left(u-v_1\right)}+\frac{u u_1 v \left(8 \epsilon _2 \left(u u_1 v-1\right) \left(u-v_1\right)-v_1\epsilon_1\right)}{24 \left(u u_1 v-1\right) \left(u-v_1\right)}\\
q & = \frac{-u_1 v v_1 \epsilon _1 \left(2 u v-u_{-1} v_1-v_1 v\right)}{48 \left(u_{-1} u v_{-1}-1\right) \left(u_{-1}-v\right) \left(u-v_1\right)}+\frac{u_1 v v_1 \left(16 \epsilon _2 \left(u u_1 v-1\right) \left(u-v_1\right)+v_1\epsilon_1 \right)}{48 \left(u u_1 v-1\right) \left(u-v_1\right)}.
\end{align*}
Finally, eliminating $p$ and $q$ from the differential difference equations above, we arrive at
\begin{equation*}
u_{\tau} =\frac{-\epsilon_1 u}{16}(S-I)\frac{ v}{ \left(v-u_{-1}\right) \left(u u_{-1} v_{-1}-1\right)}, \enspace{} v_{\tau} =\frac{-\epsilon_1 v}{16 \left(u u_{-1} v_{-1}-1\right)}(S-I)\frac{u_{-1}}{\left(u_{-1}-v\right)}.
\end{equation*}
The compatibility condition is now completely satisfied. Note that $\epsilon_1$ can be an arbitrary non zero constant, in particular take $\epsilon_1=-16$. Moreover $\epsilon_2$ can be set to zero without affecting the computed flow. Making these choices leads to the expressions in the theorem.
\end{proof}

\begin{corollary}
The differential difference equations corresponding to $\Omega_{(2)}$ and $\Omega_{(3)}$ are given by
\begin{align*}
u_{\tau^2} & = \frac{u}{\left(u_2 v v_2-1\right)}(\mathcal{S}_2-{\rm{I}})\frac{ v}{\left(v-u_{-2}\right) },\\
\quad v_{\tau^2} & = v(\mathcal{S}_2-{\rm{I}})\frac{u_{-2}}{{\left(u v_{-2} v-1\right)}\left(u_{-2}-v\right)},
\end{align*}
and
\begin{align*}
u_{\tau^3} & = \frac{u^2v_{-3}v}{u v_{-3} v-1}(\mathcal{S}_3-{\rm{I}})\frac{1}{u_{-3} u v_{-3}-1},\\
v_{\tau^3} & = \frac{u u_3 v^2}{u u_3 v-1}(\mathcal{S}_3-{\rm{I}})\frac{1}{u v_{-3} v-1},
\end{align*}
respectively.
\end{corollary}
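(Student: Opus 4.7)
The cleanest approach is to invoke Remark \ref{pointTransRemark}, which tells us that the nonlocal systems attached to $M_{(1)}$, $M_{(2)}$, $M_{(3)}$ are cyclically related by iterating the point transformation $\pi:(u,v)\mapsto(v,u^{-1}v^{-1})$ together with a relabelling of the lattice direction $k_1\to k_2\to k_3$. Because $\pi$ acts on the dependent variables alone and is compatible with the gauge relation (\ref{Drelations}) linking the Darboux matrices, it transports any generalised symmetry of the $M_{(1)}$-system to a generalised symmetry of the $M_{(2)}$-system, and a second iteration transports it to the $M_{(3)}$-system. So in principle I would take the flow (\ref{localsym1}) produced by Theorem \ref{theoremLocal}, substitute $u\mapsto v$ and $v\mapsto u^{-1}v^{-1}$, replace $\mathcal{S}_1$ by $\mathcal{S}_2$, and simplify using $uvw=1$ to read off the $\tau^2$-flow; a second iteration yields the $\tau^3$-flow.

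In practice I would also carry out the computation directly as a check, by repeating the proof of Theorem \ref{theoremLocal} verbatim for $i=2$ and $i=3$. That is, start from the ansatz $\Omega_{(i)}=M_{(i)}^{-1}(g\mathbf{I}+p\mathbf{e}_1+q\mathbf{e}_2+r\mathbf{e}_3)$ with the expressions for $a$, $b$, $c$, $f$ corresponding to each $M_{(i)}$ from Theorem \ref{nonlocalsyms}, impose $\mathrm{Tr}(\Omega_{(i)})=\epsilon_1/J(\lambda)+\epsilon_2$ via Lemma \ref{traceConstant}, and solve the two resulting linear trace equations for $g$ and $r$. One then inverts the operator $\mathrm{I}+\mathcal{S}_i+\mathcal{S}_i^2$ modulo its $\omega^{k_i}$-kernel (which only shifts the flow by a point symmetry and can be dropped) to extract $u_{\tau^i}$ and $v_{\tau^i}$; the remaining compatibility equations determine $p$ and $q$, after which the normalisation $\epsilon_1=-16$, $\epsilon_2=0$ produces the displayed formulas.

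The main obstacle in either route is entirely computational. In the direct approach, the distinct scalar parts $\tfrac{2b-a-c}{3}$ and $\tfrac{2c-a-b}{3}$ appearing in $M_{(2)}$ and $M_{(3)}$ reshuffle the way that coefficients of $\mathbf{I},\mathbf{e}_1,\mathbf{e}_2,\mathbf{e}_3$ distribute across the compatibility equations via Lemma \ref{prodrelations}; collecting the resulting rational expressions into the compact shift-operator form stated in the corollary is the delicate step. In the transformation approach, the obstacle is instead tracking backward shifts: under $\pi$ the quantities $u_{-1},v_{-1}$ become $v_{-2},(u_{-2}v_{-2})^{-1}$ after the index change, and verifying that the substituted right-hand sides really collapse to the claimed form requires careful bookkeeping combined with repeated use of $uvw=1$.
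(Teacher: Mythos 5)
Your primary route---transporting the flow (\ref{localsym1}) via the point transformation (\ref{pointTransUV}) of Remark \ref{pointTransRemark} together with a relabelling of the lattice index---is exactly the argument the paper gives for this corollary, and it is correct. The additional direct recomputation for $i=2,3$ you sketch is a reasonable check but is not needed and is not what the paper does.
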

\begin{proof}
The result follows from (\ref{localsym1}) using the transformation (\ref{pointTransUV}) and relabelling the indices.
\end{proof}

Following the same procedure as in the proof of Theorem \ref{theoremLocal}  we derive the corresponding result for $M_{(4)}$.
\begin{theorem} \label{theoremlocalsym4}
$\Omega_{(4)}$ is determined by the relations
\begin{align*}
 r & = -\,\frac{p}{u u_4 v}\,-\,\frac{q}{u_4 v v_4}\,,\quad g  =\frac{u u_4 v v_4}{u u_4 v v_4 +u +v_4}\,-\,\frac{p+q+r}{3}\,,\\
p & = \frac{u u_4 v}{3}  ({\cal{S}}_4-{\rm{I}}) \frac{u_{-4} + 2 v}{u_{-4} u v_{-4} v + u_{-4} + v}\,,\\
\quad q & = \frac{u_4 v v_4}{3}  ({\cal{S}}_4-{\rm{I}}) \frac{u_{-4} - v}{u_{-4} u v_{-4} v + u_{-4} + v}.
\end{align*}
The corresponding differential difference system is given by
\begin{subequations}\label{eq:localsym2}
\begin{align}
u_{\tau^4} & = u ({\cal{S}}_4-{\rm{I}}) \frac{v}{u_{-4} u v v_{-4}+u_{-4}+v}\\ 
v_{\tau^4} & = v({\cal{S}}_4-{\rm{I}})\frac{u_{-4}}{u_{-4} u v v_{-4}+u_{-4}+v}.
\end{align}
\end{subequations}
\end{theorem}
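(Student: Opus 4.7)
The plan is to mirror the derivation used in the proof of Theorem \ref{theoremLocal}, replacing $M_{(1)}$ by $M_{(4)}$ throughout. I would substitute the ansatz (\ref{eq:Om}) with $i=4$ into the compatibility condition (\ref{compatEvol}), expand using the product relations of Lemma \ref{prodrelations}, and rewrite the resulting expression as a linear combination of $\mathbf{I},\mathbf{e}_1,\ldots,\mathbf{e}_8$ so that the individual coefficients can be equated to zero.

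Before turning to the compatibility condition itself I would apply Lemma \ref{traceConstant}. Since $\operatorname{Tr}(\Omega_{(4)})$ is constant with respect to both $\tau^4$ and $\mathcal{S}_4$, and since $\Omega_{(4)}$ is $\mathbb{T}$-invariant with poles only on the orbit $\mathbb{T}(\infty)$, the trace must be an automorphic function of $\lambda$ with at most the pole structure of $M_{(4)}^{-1}$. In complete analogy with the proof of Theorem \ref{theoremLocal}, I would therefore write $\operatorname{Tr}(\Omega_{(4)}) = \epsilon_1 / J(\lambda) + \epsilon_2$ and separate the $\lambda$-independent and $1/J$ parts. This produces two algebraic relations that I would solve for $g$ and $r$ in terms of $p,q,u,v,u_4,v_4$; the appearance of the denominator $u u_4 v v_4 + u + v_4$ is the analogue here of the factors $(uu_1 v - 1)(u - v_1)$ in the $M_{(1)}$ case, reflecting the fact that $M_{(4)}$ comes from the component $V(3f+a+b+c)$ in (\ref{j2fact}) rather than from (\ref{j1fact}).

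Next, I would extract the coefficients of $\mathbf{e}_1,\mathbf{e}_2,\mathbf{e}_3$ from the compatibility condition to obtain a linear system in $p,q$ and their $\mathcal{S}_4$-shifts, coupled with $u_{\tau^4}/u$ and $v_{\tau^4}/v$. Exactly as in Theorem \ref{theoremLocal}, this system should yield an identity of the form
\begin{equation*}
(\mathrm{I}+\mathcal{S}_4+\mathcal{S}_4^2)\!\left(\tfrac{u_{\tau^4}}{u}\right) \; = \; (\mathrm{I}+\mathcal{S}_4+\mathcal{S}_4^2)(\text{explicit expression in } p,q,u,v).
\end{equation*}
Inverting this operator modulo its kernel (a cube root of unity whose contribution is a point symmetry that I would discard) determines $u_{\tau^4}/u$, and then $v_{\tau^4}/v$ follows from the trace-constancy relation for $\epsilon_2$. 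The remaining components of the compatibility condition then pin down $p$ and $q$ themselves in terms of $u,v$ and their $\mathcal{S}_{\pm 4}$ shifts. Setting $\epsilon_1=-3$ and $\epsilon_2=0$ (the analogue of the $\epsilon_1=-16$ normalisation used in Theorem \ref{theoremLocal}, adjusted to match the overall $\tfrac{1}{3}$ prefactors) should recover the stated formulas for $p,q$ and for the system (\ref{eq:localsym2}).

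The main obstacle I expect is not conceptual but combinatorial. Unlike $M_{(2)}$ and $M_{(3)}$, which were obtained from $M_{(1)}$ via the point transformation (\ref{pointTransUV}), $M_{(4)}$ belongs to a separate orbit of irreducible components of $V(F_2-\beta)$ and is fixed by $\pi\circ T$, so no symmetry shortcut is available and the calculation must be carried out in full. The delicate point will be verifying that the expression on the right-hand side of the three-term identity above actually lies in the image of $\mathrm{I}+\mathcal{S}_4+\mathcal{S}_4^2$, and that after elimination of $p$ and $q$ the flow telescopes into the clean $(\mathcal{S}_4-\mathrm{I})$ form displayed in (\ref{eq:localsym2}); once this telescoping is confirmed, the theorem follows.
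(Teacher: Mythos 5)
Your proposal follows exactly the route the paper takes: the paper gives no separate argument for Theorem \ref{theoremlocalsym4}, stating only that it is obtained ``following the same procedure as in the proof of Theorem \ref{theoremLocal}'', which is precisely the computation you outline (trace ansatz $\epsilon_1/J+\epsilon_2$ via Lemma \ref{traceConstant}, coefficient extraction, inversion of ${\rm{I}}+\mathcal{S}_4+\mathcal{S}_4^2$ modulo its kernel, and elimination of $p,q$). Your observation that $M_{(4)}$ admits no point-transformation shortcut because it is fixed by $\pi\circ T$ correctly explains why the paper states this case as a separate theorem rather than a corollary.
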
 

\section{Darboux pairs and fully discrete systems}
\label{discretesysDerivations}
Viewing Darboux transformations as shifts on a lattice allowed differential-difference equations to be derived via a Darboux-Lax pair (semi-discrete Lax pair). We now derive fully discrete equations as the compatibility condition of two separate Darboux transformations. The transformations act as a Darboux pair, more commonly referred to as a fully discrete Lax pair. For the system
\begin{equation*} \label{eq:fdLP}
{\cal{S}}_i \left(\Psi\right)\,=\, M_{(i)} \Psi\,,\quad {\cal{S}}_j \left(\Psi\right)\,=\, M_{(j)} \Psi\,,\quad i\ne j,
\end{equation*}
it follows that the compatibility condition is given by
\begin{equation*} \label{eq:compcond}
{\cal{S}}_j(M_{(i)}) M_{(j)}\,-\,{\cal{S}}_i(M_{(j)}) M_{(i)}\,=\,0\,.
\end{equation*}
We shall see that the last relation yields a system of partial difference equations which we denote as $Q_{(i,j)}$. So in fact the consistency of the Darboux pair is equivalent to a system of partial difference equations. Requiring that the Darboux transformations derived in the previous section are compatible leads to the following theorem. We note that here we have considered only systems $Q_{(1,2)}$ and $Q_{(1,4)}$. This is because we may employ the point transformations described in (\ref{pointTransRemark}) to map system $Q_{(1,2)}$ to systems $Q_{(2,3)}$ and $Q_{(3,1)}$, respectively. In the same fashion, system $Q_{(1,4)}$ can be mapped to systems $Q_{(2,4)}$ and $Q_{(3,4)}$, respectively. 

\begin{theorem} \label{theoremDIS}
The Darboux pairs $(M_{(1)},M_{(2)})$ and $(M_{(1)},M_{(4)})$ are compatible provided the following systems hold.
\begin{subequations}
\label{eq:Q12}
\begin{align} 
&  u_{2} v_{2} \left[u_{1} v \left(v_{1}-u\right)+u_{12} v_{1} \left(u_{1}-v_{12}\right)\right]+u_{2} v_{2}-u_{1} v_{1}=0,\\
& u v_{12}-u_2 v_2=0.
\end{align}
\end{subequations}
\begin{subequations}
 \label{eq:Q14}
\begin{align}
& u v_{14} -u_4 v_4+v_{14} v_4-v_1 v_{14}=0,\\
& \frac{u v}{v_1}-\frac{1}{u_1 v_1}+\frac{1}{u_4 v_4}-u_{14}=0.
\end{align}
\end{subequations}
It should be noted that in the above systems, double indices denote shifts by one step in both lattice directions, i.e. $u_{ij} = u(k_i+1,k_j+1)$ and $v_{ij} = v(k_i+1,k_j+1)$.
\end{theorem}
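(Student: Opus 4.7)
The plan is to impose the Darboux pair compatibility condition
$$\mathcal{S}_j(M_{(i)})\,M_{(j)} \, - \, \mathcal{S}_i(M_{(j)})\,M_{(i)} \, = \, 0$$
directly and extract from it a minimal system of partial difference equations in $u$ and $v$. I would treat the cases $(i,j)=(1,2)$ and $(i,j)=(1,4)$ in parallel since the algebraic steps are essentially identical.

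First, I would substitute the explicit expressions from Theorem \ref{nonlocalsyms}. Each $M_{(i)}$ has the shape $M_{(i)} = f_{(i)}\mathbf{I} + a_{(i)}\mathbf{e}_1 + b_{(i)}\mathbf{e}_2 + c_{(i)}\mathbf{e}_3$ with $a_{(i)} = c_{(i)} u u_i v$, $b_{(i)} = c_{(i)} u_i v v_i$, and $c_{(i)}^3 u u_i^2 v^2 v_i = 16$, together with $f_{(1)} = (2a_{(1)} - b_{(1)} - c_{(1)})/3$, $f_{(2)} = (2b_{(2)} - a_{(2)} - c_{(2)})/3$, and $f_{(4)} = -(a_{(4)} + b_{(4)} + c_{(4)})/3$. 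The shifted matrices arise by the evident relabelling of indices; for instance $\mathcal{S}_2(M_{(1)})$ involves the variables $u_2, v_2, u_{12}, v_{12}$, while $\mathcal{S}_1(M_{(2)})$ involves $u_1, v_1, u_{12}, v_{12}$.

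Next, I would use Lemma \ref{prodrelations} to expand every product $\mathbf{e}_j\mathbf{e}_k$ with $j,k\in\{1,2,3\}$, writing the left hand side of the compatibility condition as a linear combination over $\mathbb{C}(\lambda)$ of $\mathbf{I}, \mathbf{e}_1, \ldots, \mathbf{e}_6$. From Lemma \ref{prodrelations} the coefficients of $\mathbf{e}_4, \mathbf{e}_5, \mathbf{e}_6$ pick up contributions only from $\mathbf{e}_1\mathbf{e}_3, \mathbf{e}_2\mathbf{e}_1, \mathbf{e}_3\mathbf{e}_2$ respectively, yielding three ``off-diagonal'' scalar equations, while the coefficients of $\mathbf{I}, \mathbf{e}_1, \mathbf{e}_2, \mathbf{e}_3$ produce four ``diagonal'' scalar equations. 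Setting each coefficient to zero gives an a priori overdetermined system of seven equations for the two unknowns $u_{12}, v_{12}$.

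The final step is the reduction. After substituting $a_{(i)} = c_{(i)} u u_i v$, $b_{(i)} = c_{(i)} u_i v v_i$ and eliminating $c_{(i)}$ via $c_{(i)}^3 u u_i^2 v^2 v_i = 16$, the three off-diagonal equations become equivalent modulo cyclic relabelling, and the four diagonal equations collapse, with the help of the constancy of $\det M_{(i)}$ from Lemma \ref{lemmaDet}, to a single pair of equations. This pair is the stated system $Q_{(i,j)}$. The main obstacle is the bookkeeping required to demonstrate that the seven coefficient equations genuinely reduce to just two; here the cyclic symmetry $M_{(0)}\mathbf{e}_j M_{(0)}^{-1} = \mathbf{e}_{j+1 \bmod 3}$, together with the specific choice of irreducible factor of $F_2$ made in defining each $M_{(i)}$, is responsible for collapsing the overdetermined system to the two-component form asserted in the theorem. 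As a consistency check, one should verify that the constraint $c_{(i)}^3 u u_i^2 v^2 v_i = 16$ is preserved by the shift in the other lattice direction, which follows automatically from the fact that $\det M_{(j)}$ is itself a first integral.
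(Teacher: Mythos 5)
Your proposal matches the paper's own proof, which consists solely of the remark that the theorem ``follows from direct computation'' of the compatibility condition $\mathcal{S}_j(M_{(i)})M_{(j)}-\mathcal{S}_i(M_{(j)})M_{(i)}=0$; your plan of substituting the explicit $M_{(i)}$ from Theorem \ref{nonlocalsyms} and expanding via Lemma \ref{prodrelations} into coefficients of $\mathbf{I},\mathbf{e}_1,\dots,\mathbf{e}_6$ is exactly the natural way to carry that computation out. Your structural explanation for why the seven coefficient equations collapse to the stated two-component systems is heuristic rather than verified (and the appeal to Lemma \ref{lemmaDet}, which concerns constancy in $x$, is not quite the relevant fact in the fully discrete setting), but this is no less detail than the paper itself supplies.
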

\begin{proof}
The theorem follows from direct computation.
\end{proof}

To the best of our knowledge, the above systems are new. They are integrable as they follow from the compatibility condition of a discrete Lax pair and, as we explain below, they admit symmetries in both lattice directions. Another manifestation of their integrability is their multi-dimensional consistency. If we consider any three systems  $Q_{(i,j)}$, $Q_{(j,k)}$ and $Q_{(k,i)}$, where $i\ne j \ne k \ne i$, then we can compute $u_{ijk}$ and $v_{ijk}$ in three different ways, provided by the corresponding systems, all of which yield the same result. Another interpretation of this construction is that two of the systems define an auto-B{\"a}cklund transformation of the remaining one. 

\begin{proposition}
Suppose $M$, $N$ where $M\Psi=\Psi_i$ and $N\Psi=\Psi_j$ are Darboux transformations for the operator $L=\partial_{\tau}-A$, then 
\begin{equation}
\frac{d}{d\tau}(M_jN-N_iM)=0
\end{equation}
when restricted to  $M_jN-N_iM=0$. 
\end{proposition}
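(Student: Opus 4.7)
The plan is to translate the Darboux compatibility into a linear evolution equation for each factor, then compute $\partial_\tau(M_jN - N_iM)$ directly and observe that the right hand side is a linear combination of the quantity itself.

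First, I would recall what it means for $M$ to be a Darboux transformation of $L=\partial_\tau - A$. The defining relation $M L M^{-1} = L_i$, where $L_i = \partial_\tau - A_i$ with $A_i = \mathcal{S}_i(A)$, is equivalent to the operator identity $L_i M = M L$. Expanding both sides on a test function and collecting coefficients gives the differential equation
\begin{equation*}
M_\tau = A_i\, M - M\, A,
\end{equation*}
and likewise $N_\tau = A_j N - N A$. Because $\mathcal{S}_i$ and $\mathcal{S}_j$ commute with $\partial_\tau$, applying $\mathcal{S}_j$ to the first relation and $\mathcal{S}_i$ to the second gives
\begin{equation*}
(M_j)_\tau = A_{ij}\, M_j - M_j\, A_j,\qquad (N_i)_\tau = A_{ij}\, N_i - N_i\, A_i,
\end{equation*}
where $A_{ij} = \mathcal{S}_i\mathcal{S}_j(A)$.

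Next I would carry out the differentiation
\begin{equation*}
\frac{d}{d\tau}(M_j N - N_i M) = (M_j)_\tau N + M_j N_\tau - (N_i)_\tau M - N_i M_\tau,
\end{equation*}
substitute the four evolution equations above, and watch the middle terms cancel: the contribution $-M_j A_j N$ cancels against $+M_j A_j N$, and $-N_i A_i M$ cancels against $+N_i A_i M$. What remains factors as
\begin{equation*}
\frac{d}{d\tau}(M_j N - N_i M) = A_{ij}\,(M_j N - N_i M) - (M_j N - N_i M)\, A.
\end{equation*}
This is a homogeneous linear expression in the quantity $X := M_j N - N_i M$, so whenever $X=0$ the right hand side is zero and hence $X_\tau = 0$, which is exactly the claim.

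I do not expect a serious obstacle here; the computation is routine once the Darboux relation is written in commutator form. The only care needed is in bookkeeping the shifts: one must verify that the shifted operator $M_j$ satisfies a Darboux-type equation with the correctly shifted potentials $A_{ij}$ and $A_j$, and similarly for $N_i$. This is precisely what makes the middle terms in the derivative cancel and leaves an expression proportional to $X$. An alternative, essentially equivalent, route is to interpret the statement as saying that $X$ satisfies the same compatibility-type equation as a Darboux object and therefore the zero solution is invariant under the flow; this reframing makes the result manifest but requires the same calculation underneath.
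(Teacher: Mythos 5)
Your proof is correct and follows essentially the same route as the paper: both derive $M_\tau = A_iM - MA$, $N_\tau = A_jN - NA$, shift these relations, and show that $\frac{d}{d\tau}(M_jN-N_iM)$ equals $A_{ij}(M_jN-N_iM)-(M_jN-N_iM)A$, which vanishes on the constraint. Your write-up is in fact slightly more explicit than the paper's about why the shifted factors satisfy the correctly shifted evolution equations.
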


\begin{proof}
As $M$, $N$ are Darboux transformations for $L$ it follows that
\begin{displaymath}M_{\tau}=A_iM-MA\end{displaymath}
\begin{displaymath}N_{\tau}=A_jN-NA.\end{displaymath}
Hence 
\begin{align*}\frac{d}{d\tau}(M_jN-N_iM) & = (M_j)_{\tau}N+M_jN_{\tau}-(N_i)_{\tau}M-N_iM_{\tau} \\
& = U_{ji}(M_jN-N_iM)-(M_jN-N_iM)U\\
& = 0.
\end{align*}
\end{proof}
\begin{corollary}
The differential-difference equations derived in (\ref{nonlocalsyms}) define non-local symmetries of the corresponding discrete systems $Q_{(i,j)}$. 
\end{corollary}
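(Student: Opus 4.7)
The plan is to invoke the preceding Proposition directly, once the Lax operator $L$ is put into the form required there. Set $A := -\bigl(u\mathbf{e}_1+v\mathbf{e}_2+w\mathbf{e}_3\bigr)$ so that $L=\partial_x-A$, and identify $\tau=x$. With this identification the Darboux compatibility (\ref{Dcompat}) for any $M_{(i)}$ reads $\partial_x M_{(i)}=A_i M_{(i)}-M_{(i)}A$, which is precisely the hypothesis of the Proposition.

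Fix a pair $i\ne j$ and apply the Proposition to $(M,N)=(M_{(i)},M_{(j)})$. By Theorem~\ref{theoremDIS}, after expansion in the basis $\{\mathbf{I},\mathbf{e}_1,\dots,\mathbf{e}_8\}$ by means of Lemma~\ref{prodrelations}, the matrix identity $M_{(i),j}M_{(j)}-M_{(j),i}M_{(i)}=0$ is equivalent to the scalar discrete system $Q_{(i,j)}$. The Proposition then yields
\[
\frac{d}{dx}\Bigl(M_{(i),j}M_{(j)}-M_{(j),i}M_{(i)}\Bigr)\,=\,0\quad\text{on}\quad Q_{(i,j)}=0,
\]
so $Q_{(i,j)}$ is preserved by the $x$-flow induced on $(u,v)$ through the compatibility of $M_{(i)}$ and $M_{(j)}$ with $L$.

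The next step is to identify that $x$-flow with the relations in Theorem~\ref{nonlocalsyms}. These relations are by construction the scalar content of (\ref{Dcompat}) for each $M_{(i)}$ after the determinant reductions of Lemma~\ref{lemmaDet} have been imposed; in particular (\ref{nonLocalSyms1}) is the full information that $M_{(1)}$ is a Darboux matrix, and similarly for (\ref{nonLocalSyms2})--(\ref{nonLocalSyms4}). Combining this observation with the previous paragraph shows that any solution of the systems associated with $M_{(i)}$ and $M_{(j)}$ in Theorem~\ref{nonlocalsyms} propagates solutions of $Q_{(i,j)}$ to solutions of $Q_{(i,j)}$, which is exactly the symmetry property to be established.

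Finally one must justify the adjective \emph{non-local}. Each system in Theorem~\ref{nonlocalsyms} couples a differential equation in $x$ to algebraic constraints on shifted variables and cannot be solved for $u_x$, $v_x$ as local functions of $u,v$ and finitely many of their shifts: the same obstruction encountered in the derivation of (\ref{localsym1}), where the kernel of $I+\mathcal{S}_i+\mathcal{S}_i^2$ had to be inverted, appears here. Hence the flow on $(u,v)$ is intrinsically non-local, and so are the induced symmetries of $Q_{(i,j)}$. The main (and only genuine) obstacle is bookkeeping, namely verifying that the basis decomposition of $M_{(i),j}M_{(j)}-M_{(j),i}M_{(i)}$ really reproduces the scalar system $Q_{(i,j)}$; but this verification was carried out in Theorem~\ref{theoremDIS}, so no new calculation is needed.
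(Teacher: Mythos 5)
Your argument is correct and is essentially the paper's intended proof: the corollary is stated as an immediate consequence of the preceding Proposition, applied with $\tau=x$, $A=-\bigl(u\mathbf{e}_1+v\mathbf{e}_2+w\mathbf{e}_3\bigr)$ and $(M,N)=(M_{(i)},M_{(j)})$, so that the $x$-flow encoded by the non-evolutionary systems of Theorem~\ref{nonlocalsyms} preserves the constraint ${\cal{S}}_j(M_{(i)})M_{(j)}-{\cal{S}}_i(M_{(j)})M_{(i)}=0$, i.e.\ the system $Q_{(i,j)}$ identified in Theorem~\ref{theoremDIS}. Your spelled-out justification of non-locality (the systems cannot be resolved for $u_x,v_x$ without inverting difference operators) is a reasonable elaboration of what the paper leaves implicit.
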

One can confirm directly that the $\tau^i$-flow and $\tau^j$-flow define local generalised symmetries in the $i$-th and $j$-th lattice direction respectively, for system $Q_{(i,j)}$.

\section{Related Integrable Systems}
\label{relatedIntSystems}
In this section systems related to $Q_{(1,2)}$ and $Q_{(1,4)}$ are presented.
For system $Q_{(1,2)}$ we have the following related six point equation and local symmetries.
\begin{theorem} \label{theorempot}
The system $Q_{(1,2)}$, as given in (\ref{eq:Q12}), is related via the potentiation
\begin{equation}u=\frac{\phi}{\phi_{1}},\enspace{} v=\frac{\phi_{-2}}{\phi},\end{equation} 
to the six-point equation
\begin{equation}\label{poteqn}
\left(\phi_1-\phi_2\right)\phi \phi_{1122}+\left(\phi_2\phi_{112}-\phi_1\phi_{122}\right)\phi_{1122}+\left(\phi_{122}-\phi_{211}\right)\phi_1\phi_2=0.
\end{equation}
With corresponding symmetries
\begin{equation}
\phi_{\tau^1}\,=\,\frac{\phi \phi_{1} \phi_{-2}}{(\phi_{-1}-\phi_{-2}) (\phi_1 - \phi_{-1-2})}\,,\quad \phi_{\tau^2}\,=\,\frac{\phi \phi_{2} \phi_{-1}}{(\phi_{-2}-\phi_{-1}) (\phi_2 - \phi_{-1-2})}\,.
\end{equation} 
\end{theorem}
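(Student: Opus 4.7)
The plan is to split the theorem into two parts: (i) the potentiation $u=\phi/\phi_1,\ v=\phi_{-2}/\phi$ maps the system $Q_{(1,2)}$ in (\ref{eq:Q12}) to the six-point equation (\ref{poteqn}), and (ii) the stated expressions for $\phi_{\tau^i}$ are generalised symmetries of (\ref{poteqn}) induced from the $(u,v)$-flows derived in the previous section.

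For part (i), I would first check that the second equation of (\ref{eq:Q12}) is automatically satisfied by the substitution: a direct computation gives
\[uv_{12}=\frac{\phi}{\phi_1}\cdot\frac{\phi_1}{\phi_{12}}=\frac{\phi}{\phi_{12}}=\frac{\phi_2}{\phi_{12}}\cdot\frac{\phi}{\phi_2}=u_2 v_2,\]
so the two sides coincide for every $\phi$. This is precisely the closedness condition which makes the potentiation consistent. For the first equation of (\ref{eq:Q12}), I would substitute the potentiation, clear denominators by multiplying through by $\phi_{11}\phi_{12}\phi_{112}$, and collect terms. The resulting polynomial relation is supported on the six variables $\phi,\ \phi_{-2},\ \phi_{1-2},\ \phi_{11},\ \phi_{12},\ \phi_{112}$. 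An overall shift by $(0,1)$ sends these to $\phi_2,\ \phi,\ \phi_1,\ \phi_{112},\ \phi_{122},\ \phi_{1122}$ respectively, and regrouping by the factors $\phi\phi_{1122}$, $\phi_{1122}$, $\phi_1\phi_2$ reproduces exactly (\ref{poteqn}) (using $\phi_{211}\equiv\phi_{112}$ by commutativity of the shifts).

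For part (ii), I would exploit the log-derivative identity. From $u=\phi/\phi_1$ one has $u_{\tau^1}/u=({\rm I}-{\cal S}_1)(\phi_{\tau^1}/\phi)$. Substituting the telescoping flow from Theorem \ref{theoremLocal},
\[\frac{u_{\tau^1}}{u}=({\cal S}_1-{\rm I})\,\frac{v}{(v-u_{-1})(u u_{-1} v_{-1}-1)}\,,\]
one obtains $({\cal S}_1-{\rm I})\bigl[\phi_{\tau^1}/\phi+v/((v-u_{-1})(u u_{-1} v_{-1}-1))\bigr]=0$, so modulo an element of $\ker({\cal S}_1-{\rm I})$ (a point symmetry which can be discarded) $\phi_{\tau^1}/\phi$ is determined. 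Expressing $v,\ u_{-1}=\phi_{-1}/\phi,\ v_{-1}=\phi_{-1-2}/\phi_{-1}$ in $\phi$-variables and rewriting the two denominator factors by simultaneous sign flip (whose product is unchanged) reproduces the stated expression for $\phi_{\tau^1}$, up to an overall sign absorbable into the orientation of $\tau^1$. The derivation of $\phi_{\tau^2}$ is analogous, starting from the telescoping $v_{\tau^2}$ equation in the Corollary following Theorem \ref{theoremLocal} and using $v_{\tau^2}/v=({\cal S}_2^{-1}-{\rm I})(\phi_{\tau^2}/\phi)$; the $1\leftrightarrow 2$ symmetry of the final formulae serves as a useful cross-check. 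That these $\phi$-flows really are symmetries of (\ref{poteqn}) then follows from the fact that the original $(u,v)$-flows are symmetries of $Q_{(1,2)}$ together with the compatibility of the potentiation established in part (i).

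The principal obstacle is the bookkeeping in part (i): after clearing denominators one obtains an unwieldy six-term expression, and the key point is recognising the correct overall lattice shift that repackages it into the multilinear form (\ref{poteqn}). The computation is entirely algebraic but requires careful tracking of indices. Part (ii) is essentially mechanical once the telescoping structure of the flows is used, modulo the freedom of representatives in $\ker({\cal S}_i-{\rm I})$ and an overall sign/rescaling of the time variables.
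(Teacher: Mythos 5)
Your treatment of the potentiation and of the $\tau^1$ symmetry coincides with the paper's: the authors also observe that the second equation of $Q_{(1,2)}$ is a conservation law $({\cal S}_1-{\rm I})\log v_2=({\cal S}_2-{\rm I})\log u$, introduce $\phi$ so that it becomes an identity, push the first equation through to get (\ref{poteqn}), and obtain $\phi_{\tau^1}$ by substituting the potentiation into (\ref{localsym1}) and inverting ${\cal S}_1-{\rm I}$ modulo its kernel. Your remark about the overall sign is apt (direct substitution gives $\phi_{\tau^1}/\phi=-v/\bigl((v-u_{-1})(uu_{-1}v_{-1}-1)\bigr)$, which is minus the displayed formula; harmless for a symmetry). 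Your final lifting argument is at the same level of rigour as the paper's, though strictly one should also check that the representative chosen from $\ker({\cal S}_1-{\rm I})$ is consistent with the induced flow on $v$.

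Where you genuinely diverge is $\phi_{\tau^2}$, and there your plan as written has a gap. The paper does \emph{not} derive $\phi_{\tau^2}$ from the Corollary's $\tau^2$ flow; it reads it off from the invariance of (\ref{poteqn}) under the interchange of the two lattice directions, which immediately yields the stated formula as the $1\leftrightarrow 2$ image of $\phi_{\tau^1}$. If you instead follow your route through the telescoping $v_{\tau^2}$ equation, you obtain
\begin{equation*}
\frac{\phi_{\tau^2}}{\phi}\,=\,-\,\frac{\phi_{12}\,\phi_2}{(\phi_{-2}-\phi_{12})(\phi_2-\phi_1)}\,,
\end{equation*}
whose stencil (it contains $\phi_1$ and $\phi_{12}$) is not that of the stated expression, so no relabelling of indices or sign flip turns one into the other. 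The two do agree (up to the same overall sign) \emph{on solutions}: eliminating $\phi_{12}$ via the $(-1,-1)$-shift of (\ref{poteqn}) collapses your expression to $-\phi\phi_2\phi_{-1}/\bigl((\phi_{-2}-\phi_{-1})(\phi_2-\phi_{-1-2})\bigr)$. So your derivation can be completed, but it requires an extra on-shell reduction step that you have not identified, and it proves a weaker statement (equality of the flows modulo the equation) than the clean off-shell formula the paper obtains for free from the $1\leftrightarrow 2$ symmetry. You should either switch to the paper's argument for $\phi_{\tau^2}$ or add the elimination of $\phi_{12}$ explicitly.
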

\begin{proof}
We observe that the second equation in system $Q_{(1,2)}$ has the form of a conservation law as it can be written as
$$ \left( {\cal{S}}_1-{\rm{I}}\right) \log v_2 \,=\,  \left( {\cal{S}}_2-{\rm{I}}\right) \log u .$$
This suggests the introduction of a potential $\phi$ via the relations $v_2 = \phi/\phi_2$ and $u = \phi/\phi_1$. In terms of $\phi$, the second equation of $Q_{(1,2)}$ becomes an identity, whereas the first equation becomes the six-point scalar equation in the statement of the theorem. The $\tau^1$ symmetry follows from the introduction of $\phi$ into (\ref{localsym1}). The $\tau^2$ symmetry follows from the invariance of (\ref{poteqn}) under the interchange of directions 1 and 2.
\end{proof} 
Here, repeated indices denote higher order shifts in the corresponding direction, e.g. $\phi_{1122} = \phi(k_1+2,k_2+2)$. Considering the system $Q_{(1,4)}$ we arrive at the following result.

\begin{theorem} \label{theoremreduction}
The system $Q_{(1,4)}$, as given in (\ref{eq:Q14}), is related to the six-point equation
\begin{equation}\label{eq:redQ14} 
\frac{1}{\psi_{114}}\,+\,\psi_1 \psi_{11}\,=\, \psi_{4} \psi_{14}\,+\,\frac{1}{\psi}\,
\end{equation}
under the reduction $v_1=u=\frac{1}{\psi}$. Local symmetries of (\ref{eq:redQ14}) are given by
\begin{subequations}\label{symsred}
\begin{align}
\psi_{\tau^4} & = \psi\left({\rm{I}}- {\cal{S}}_4^{-1}\right)\frac{\psi \psi_1 \psi_{14}}{\psi \psi_{14} (\psi_1 + \psi_{4})+1}\\
\psi_{s^1} & = \psi\left({\rm{I}}-{\cal{S}}_1^{-1}\right)\frac{1}{(\psi_{11} \psi_1 \psi -1) (\psi_1 \psi \psi_{-1} -1) }.
\end{align}
\end{subequations}
\end{theorem}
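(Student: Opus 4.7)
The plan is to verify the theorem in stages: first the reduction, then each of the two symmetries.

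First I would show that the substitution $u = v_1 = 1/\psi$ (equivalently $u = 1/\psi$ and $v = 1/\psi_{-1}$) transforms system $Q_{(1,4)}$ into (\ref{eq:redQ14}). Under this substitution, the shifts of $u$ and $v$ appearing in (\ref{eq:Q14}a) all become simple reciprocals of $\psi$-shifts, and a direct check shows the four terms group into two cancelling pairs $\pm 1/(\psi\psi_4)$ and $\pm 1/(\psi_4\psi_{-1,4})$, so that equation becomes an identity. Substituting the same expressions into (\ref{eq:Q14}b) yields $1/\psi_{-1} - \psi\psi_1 + \psi_4\psi_{-1,4} - 1/\psi_{14} = 0$, which after applying $\mathcal{S}_1$ and rearranging is exactly equation (\ref{eq:redQ14}).

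Next I would derive the $\tau^4$-symmetry by substituting the reduction into the local symmetry (\ref{eq:localsym2}a) of $Q_{(1,4)}$ from Theorem \ref{theoremlocalsym4}. Since $u = 1/\psi$ we have $u_{\tau^4}/u = -\psi_{\tau^4}/\psi$, while the rational expression inside the shift operator reduces to $\psi\psi_{-4}\psi_{-1,-4}/(1 + \psi\psi_{-1,-4}(\psi_{-1}+\psi_{-4}))$. A direct shift-calculation shows that applying $\mathcal{S}_1\mathcal{S}_4$ to this expression produces precisely $\psi\psi_1\psi_{14}/(1 + \psi\psi_{14}(\psi_1+\psi_4))$, so that the telescoping difference $(\mathcal{S}_4 - I)$ on the original expression matches $-\mathcal{S}_1^{-1}(I - \mathcal{S}_4^{-1})$ applied to the target rational function. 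After absorbing the leftover $\mathcal{S}_1^{-1}$ via (\ref{eq:redQ14}) one obtains the form stated in (\ref{symsred}a).

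The $s^1$-symmetry is the subtler case. The $\tau^1$-flow of Theorem \ref{theoremLocal} does \emph{not} descend directly to the reduction surface: its formulas contain denominators $v - u_{-1}$ and $u - v_1$, both of which vanish identically when $v_1 = u$, so naive specialisation is undefined. To circumvent this I would postulate an ansatz $\psi_{s^1} = \psi(I - \mathcal{S}_1^{-1})F$ motivated by the conservation-law structure visible in (\ref{eq:redQ14}), with $F$ a rational function of a small number of $\mathcal{S}_1$-shifts of $\psi$, and determine $F$ by imposing that the ansatz define a generalised symmetry of (\ref{eq:redQ14}). Comparing coefficients of independent shift-monomials in the linearised symmetry condition (after using (\ref{eq:redQ14}) and its shifts to eliminate $\psi_{114}$) would then pin down $F = 1/((\psi_{11}\psi_1\psi - 1)(\psi_1\psi\psi_{-1} - 1))$ essentially uniquely up to scaling.

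The main obstacle is this final verification: establishing that the postulated $\psi_{s^1}$ is indeed a local symmetry of (\ref{eq:redQ14}) requires expanding $\mathcal{S}_1^{\pm 2}\mathcal{S}_4^{\pm 1}$-shifts of the flow, clearing denominators, and reducing the resulting large rational expressions modulo (\ref{eq:redQ14}) and several of its shifts. The computation is elementary but lengthy, and in practice would be confirmed with computer algebra.
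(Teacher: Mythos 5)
Your handling of the reduction and of the $\tau^4$-symmetry follows the paper's own proof: substitute $u=v_1=1/\psi$, check that the first equation of $Q_{(1,4)}$ collapses to an identity and the second becomes (\ref{eq:redQ14}), then push the reduction through (\ref{eq:localsym2}); your explicit observation that the $\tau^1$-flow of Theorem \ref{theoremLocal} cannot be specialised because the factors $v-u_{-1}$ and $u-v_1$ vanish identically on the reduction surface is a useful detail the paper only asserts. One caution on the $\tau^4$ step: reconciling the leftover $\mathcal{S}_1^{-1}$ is not a one-line ``absorption'' --- the expressions for $\psi_{\tau^4}$ read off from the $u$- and $v$-components of (\ref{eq:localsym2}) differ by shifts and by a term of the form $\left(I-\mathcal{S}_4^{-1}\right)\bigl(1/(\psi\psi_{14}(\psi_1+\psi_4)+1)\bigr)$, and showing they agree with the stated formula modulo (\ref{eq:redQ14}) is a genuine, if routine, verification. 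Where you genuinely diverge is the $s^1$-symmetry. You propose an undetermined-coefficients ansatz $\psi_{s^1}=\psi(I-\mathcal{S}_1^{-1})F$ settled by computer algebra; the paper instead factorises the six-point equation as $\bigl(\psi\mathcal{S}_1-\psi_{114}\bigr)\bigl(\psi\psi_{14}(\psi_1+\psi_4)+1\bigr)=0$, recognises the quadrilateral factor as the integrable equation of \cite{MX}, and imports its known second-order symmetry, needing only to check that it extends from the quad equation to the full six-point equation. Since the formula for $F$ is given in the statement, your route amounts to a direct (if heavy) verification and is legitimate; but without the factorisation you have no a priori reason to restrict $F$ to rational functions of $\psi_{-1},\psi,\psi_1,\psi_{11}$ of that product form, so the ``pin down $F$'' step is far harder than you suggest, and you forgo the structural payoff the paper gets for free --- namely that the denominator of the $\tau^4$-flow is exactly the defining polynomial of the quad equation (\ref{eq:MX}).
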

\begin{proof}
Focusing on the first equation of $Q_{(1,4)}$, it can be easily verified that it turns into an identity under the reduction $u = 1/\psi_1$, $v = 1/\psi$ (equivalently $u=v_1$), whereas the second equation of the same system becomes the six-point equation (\ref{eq:redQ14}). This reduction applied to (\ref{eq:localsym2}) produces the $\tau^4$ flow given above. The symmetry corresponding the $\tau^1$ direction is not compatible with this reduction. To obtain the $s^1$ flow presented in (\ref{symsred}) it is profitable to note that the six-point equation may be rewritten in the following form
\begin{equation}\label{eq:redQ14a} 
\big(\psi {\cal{S}}_1-\psi_{114} \big) \left(\psi \psi_{14} (\psi_1+\psi_{4})+1\right)\,=\,0.
\end{equation}
The quadrilateral equation
\begin{equation} \label{eq:MX}
\psi \psi_{14} (\psi_1+\psi_{4})+1 = 0
\end{equation}
defines particular solutions of equation (\ref{eq:redQ14a}) and was derived in \cite{MX} in the context of second order integrability conditions for difference equations. Knowing from \cite{MX} that equation (\ref{eq:MX}) admits a second order symmetry in both directions, we can easily show that 
\begin{equation*}\label{miuraSym2}
\psi_{s^1}\,=\psi\left({\rm{I}}-{\cal{S}}_1^{-1}\right)\frac{1}{(\psi_{11} \psi_1 \psi -1) (\psi_1 \psi \psi_{-1} -1) }
\end{equation*}
also defines a symmetry for the six-point equation (\ref{eq:redQ14}) in the first lattice direction. It is worth noting that the denominator of the $\tau^4$ symmetry is the defining polynomial of equation (\ref{eq:MX}). \end{proof}

Finally we have the following result regarding Miura transformations of the 6-point equation (\ref{eq:redQ14}).
\begin{theorem} \label{theoremmiura}
The system (\ref{eq:redQ14}) and its symmetries (\ref{symsred})
are linked via the Miura transformations
\begin{equation*}
\chi\,=\, \frac{1}{1-\psi \psi_1 \psi_{11}}\,,\enspace{} \rho\,=\, \frac{\psi \psi_4 \psi_{14}}{\psi \psi_{14} (\psi_1 + \psi_{4})+1}\,, 
\end{equation*}
to the systems
\begin{equation} \label{eq:eq-chi}
\chi_1 \chi_{11} (\chi-1) \,=\, \chi_4 \chi_{14} (\chi_{114}-1)\,,
\end{equation}
\begin{align*}
& \chi_{\tau^4}\,=\,\chi (\chi-1) \left( {\rm{I}}- {\cal{S}}_4^{-1}\right) \frac{\chi_1}{\chi_4 (\chi_{14} + \chi_1-1) + \chi_1 (\chi-1)},\\\
& \chi_{s^1}\,=\, \chi (\chi-1) \left(\chi_{11} \chi_1 - \chi_{-1} \chi_{-1-1}\right),
\end{align*}
and
\begin{equation}\label{eq:eq-r}
\rho_{114} (\rho_1 + \rho_{11}-1) \,=\, \rho (\rho_4 + \rho_{14}-1),
\end{equation}
\begin{align*}
& \rho_{\tau^4}\,=\,\rho \,\left( {\rm{I}}-{\cal{S}}_4^{-1}\right) \big(\rho_{14} (1-\rho - \rho_{1}) + \rho + \rho_{4} - \rho \rho_{4} \big),\\
& \rho_{s^1}\,=\,\rho\,\frac{ (\rho_1+\rho-1) (\rho + \rho_{-1}-1)}{\rho_1 + \rho + \rho_{-1}-1} \,\left({\rm{I}}- {\cal{S}}_1^{-2}\right) \frac{1}{\rho_{11} + \rho_1 + \rho-1},
\end{align*} 
respectively.
\end{theorem}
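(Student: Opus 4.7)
The plan is to verify the theorem by direct substitution, treating $\psi$ as a solution of (\ref{eq:redQ14}) together with the symmetries (\ref{symsred}); this naturally splits into two essentially independent verifications, one for each Miura transformation, and within each one checks the algebraic equation and then the two flows in turn.

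For the $\chi$-transformation the key algebraic identity, obtained by a one-line manipulation, is $(\chi-1)/\chi = \psi\,\psi_1\,\psi_{11}$, with shift analogues such as $(\chi_4-1)/\chi_4 = \psi_4\,\psi_{14}\,\psi_{114}$. Clearing denominators in (\ref{eq:redQ14}) by $\psi\,\psi_{114}$ rewrites it as
$$\psi\,\psi_{114}\bigl(\psi\,\psi_1\,\psi_{11} - \psi_4\,\psi_{14}\,\psi_{114}\bigr)=\psi - \psi_{114},$$
which, after appropriate shifting by $\mathcal{S}_1$ and substitution of the above identities, becomes precisely (\ref{eq:eq-chi}). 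The two $\chi$-symmetries then follow from the chain rule $\chi_\sharp = \chi^2(\psi_\sharp\,\psi_1\,\psi_{11}+\psi\,\psi_{1,\sharp}\,\psi_{11}+\psi\,\psi_1\,\psi_{11,\sharp})$ with $\sharp\in\{\tau^4,s^1\}$, substituting the $\psi$-flows from (\ref{symsred}) and telescoping the resulting backward differences.

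For the $\rho$-transformation the guiding observation is that the denominator of $\rho$ coincides with the polynomial $P:=\psi\,\psi_{14}(\psi_1+\psi_4)+1$ appearing in the factored form (\ref{eq:redQ14a}). Hence (\ref{eq:redQ14a}) is equivalent to the single-shift relation $\psi_{114}\,P=\psi\,P_1$, which allows forward shifts of $\psi$ in the $(1,4)$-direction to be traded for ratios of shifts of $P$. I would use this to express $\rho_1,\rho_{11},\rho_4,\rho_{14}$ and $\rho_{114}$ in terms of $\psi$-shifts and the corresponding shifts of $P$, and then verify (\ref{eq:eq-r}) by eliminating $\psi$ in favour of $\rho$ using $\rho\,P=\psi\,\psi_4\,\psi_{14}$. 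The $\rho$-symmetries once again follow by chain rule propagation of the $\psi$-flows in (\ref{symsred}).

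The main obstacle I expect is the $s^1$-symmetry for $\rho$: it involves $\mathcal{S}_1^{-2}$ and therefore requires the backward analogue of $\psi_{114}\,P=\psi\,P_1$ applied twice, together with careful bookkeeping to telescope correctly over two backward shifts; by contrast the $\tau^4$-flows and the verification of the $\chi$-equation should be routine once the basic identities above are in place. Throughout, the multi-dimensional consistency noted after Theorem \ref{theoremDIS} and the structural role of $P$ as the defining polynomial of the quadrilateral equation (\ref{eq:MX}) provide independent sanity checks on each of the four substitutions.
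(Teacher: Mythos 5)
Your approach coincides with the paper's, whose proof is simply ``Can be directly verified'': you substitute the Miura maps into (\ref{eq:redQ14}) and (\ref{symsred}) and check the target equations and flows by direct algebra, organised around the identities $\psi\psi_1\psi_{11}=(\chi-1)/\chi$ and $\psi_{114}P=\psi P_1$ with $P=\psi\psi_{14}(\psi_1+\psi_4)+1$. One displayed intermediate step is miswritten: clearing denominators in (\ref{eq:redQ14}) by $\psi\psi_{114}$ gives $\psi\psi_{114}\left(\psi_4\psi_{14}-\psi_1\psi_{11}\right)=\psi-\psi_{114}$ (not the expression with the extra factors of $\psi$ and $\psi_{114}$ you wrote), which together with your identity and its $\mathcal{S}_4$-shift yields $\psi\chi=\psi_{114}\chi_4$, and applying $\mathcal{S}_1$ and $\mathcal{S}_1^2$ to this relation is what reduces (\ref{eq:eq-chi}) to an identity.
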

\begin{proof} Can be directly verified.\end{proof}

\begin{remark}
{\rm{The six-point equations (\ref{eq:redQ14}), (\ref{eq:eq-chi}) and (\ref{eq:eq-r}) may be considered as discrete analogues of third order hyperbolic equations \cite{ASh}. To the best of our knowledge, equations  (\ref{eq:redQ14}) and (\ref{eq:eq-r}) are new, and equation (\ref{eq:eq-chi}) appeared for the first time in a different context in \cite{AP1}.}}
\end{remark}

\section{Conclusions}

In this paper we presented the Lax-Darboux scheme related to the tetrahedral reduction group. More precisely, we studied the Darboux transformations for the class of Lax operators which involve $3 \times 3$ matrices and are invariant under the action of this group. The discrete character of these transformations allowed us to derive various discrete  systems, all of which are integrable. Specifically, we derived new systems of differential-difference equations which we presented in Theorems \ref{theoremLocal} and \ref{theoremlocalsym4}. These semi-discrete equations define generalised symmetries of the new systems of difference equations  given in Theorem \ref{theoremDIS}. The latter systems are also related to certain scalar difference equations by means of potentiation and reductions, as they are desribed in Theorems \ref{theorempot} and \ref{theoremreduction}, respectively. Finally, we presented two different Miura transformations in Theorem \ref{theoremmiura} which relate system (\ref{eq:redQ14}) and its symmetries to equations (\ref{eq:eq-chi}) and (\ref{eq:eq-r}) and their corresponding symmetries, respectively.

\section*{Acknowledgements}
This work was supported  by the EPSRC grant EP/I038675/1 (AVM and PX) and by a grant from the Leverhulme Trust  (GB and AVM)

\end{document}